\newtheorem{prop}{Proposition}
\theoremstyle{plain}
\theoremstyle{break}
\newcommand{\qed}{\phantom{xxxxxx}\hfill q.e.d.}
\newenvironment{proof}{{\noindent\em Proof:} }{\qed\\}
\newcommand{\rz}{{\mathbb R}}
\newcommand{\nz}{{\mathbb N}}
\newcommand{\zz}{{\mathbb Z}}
\newcommand{\quotemarks}[1]{`\emph{#1}'}
\begin{document}

\newcommand{\myfont}{\sf }

\title{\textbf{Linear and fractal diffusion coefficients in a family of one dimensional chaotic maps}}

\author{\textbf{Georgie Knight and Rainer Klages}\\
School of Mathematical Sciences, Queen Mary University of London,\\
Mile End Road, London E1 4NS, UK}

\maketitle


\begin{abstract}

We analyse deterministic diffusion in a simple, one-dimensional
setting consisting of a family of four parameter dependent, chaotic
maps defined over the real line. When iterated under these maps, a
probability density function spreads out and one can define a
diffusion coefficient.  We look at how the diffusion coefficient
varies across the family of maps and under parameter variation. Using
a technique by which Taylor-Green-Kubo formulae are evaluated in terms
of generalised Takagi functions, we derive exact, fully analytical
expressions for the diffusion coefficients. Typically, for simple maps
these quantities are fractal functions of control parameters. However,
our family of four maps exhibits both fractal and linear behavior. We
explain these different structures by looking at the topology of the
Markov partitions and the ergodic properties of the maps.\\

\noindent PACS numbers: 05.45.Ac, 05.45.Df, 05.60.Cd
\end{abstract}

\section{Introduction}
\label{sec:intro}

Chaotic diffusion lies at the heart of the interaction between
transport theory and dynamical systems
\cite{Dorfman-book,Gasp-1998,Klages-07}. The study of this interaction
was started in the early 1980's
\cite{Geisel-1982,Schell-1982,Grossman-1982} and since then defines an active
area of research producing interesting results for both mathematicians
and physicists \cite{Cristadoro-2005,Keller-2008}. By analysing
transport theory in the context of dynamical systems, one can take
into account the deterministic equations of motion and therefore does
not need to rely on any stochastic based analysis.

In this paper, we evaluate the diffusion coefficients for a set of
piecewise linear, one-dimensional chaotic dynamical systems. There are
a few methods for analytically evaluating the diffusion coefficient of
a dynamical system; there is the cycle expansion technique described
in \cite{Cvit-chaos}, there is an interesting method based on the
zeros of the systems dynamical zeta function \cite{Cristadoro-2005},
there is the \quotemarks{first passage method} \cite{Klages-96} from
statistical physics which is based on the escape rate of the system
and involves computing the eigenmodes of the probability density and
there is the \quotemarks{twisted eigenstate method} described in
\cite{Klages-07}.  We are interested in how the diffusion coefficient
varies under parameter variation. The diffusion coefficient is often a
very complicated fractal function of the parameter in simple one
dimensional maps, this result first being reported in
\cite{Klages-95}. Unfortunately, analytically the above methods can
often only be worked out for specific values of control parameters, or
are otherwise rather difficult to handle. We will thus use a method
based on the Taylor-Green-Kubo formula
\cite{Dorfman-book,Gasp-1998,Klages-07} and certain fractal
\quotemarks{generalised Takagi functions} from \cite{Klages-96}, in
order to highlight this methods power at dealing with the complicated
structure that one typically finds in the diffusion
coefficient. Furthermore, to this end we choose our parameter in such
a way as to simplify the derivation of the diffusion coefficients
relative to the work in \cite{Klages-96}. Our main results are the
analytic expressions for the parameter dependent diffusion
coefficients of these maps, an explanation for the mixture of
linearity and fractality in the diffusion coefficients and the
discovery that the diffusion coefficients are actually very stable in
certain parameter ranges.

In section \ref{sec:family} we introduce the family of maps that are
under scrutiny before briefly describing the method we use to obtain
the parameter dependent diffusion coefficients in section
\ref{sec:diffcoeff}. In section \ref{sec:Eval_tak} we discuss and
define functional recursive relations in terms of infinite sums for
the interesting fractal Takagi functions that we meet. In section
\ref{sec:struc_diff} we analyse the structure of the diffusion
coefficients and explain the features that we observe, most
importantly we explain why we get a mixture of fractality and
linearity. We also employ a method based on the Markov partitions of
the maps to pinpoint exactly where the local extrema will be in these
fractal diffusion coefficients
\cite{Klages-95,Klages-99,Klages-96}. In addition we explore the fact
that in certain parameter regions, the diffusion coefficients are very
stable to dramatic changes in the microscopic dynamics of the maps,
this property being interestingly opposed to the fact that the
diffusion coefficients are extremely sensitive to parameter variation
in other areas of the parameter space. Section \ref{sec:conclusion} is
a conclusion.

\section{The family of maps}
\label{sec:family}

In this section the family of maps and how they are constructed will be introduced.

For $h \geq 0$ let $M_h(x):[0,1]\rightarrow \rz$ be a parameter dependent variant of the well known Bernoulli shift map. The parameter lifts the first branch, and lowers the second branch i.e.

\begin{equation}
M_h (x)=
\left\{
\begin{array}{rl}
2x + h & 0\leq x <\frac{1}{2}\\
2x -1 -h & \frac{1}{2}\leq x < 1\end{array}\right. .
\label{Eq:M_h_box}
\end{equation}

In order to create an extended system for diffusion, we define $M_h(x):\rz \rightarrow \rz$ by periodically copying equation (\ref{Eq:M_h_box}), with a lift of degree one such that

\begin{equation}
                M_h(x+n)=M_h(x)+n, \ \ n\in \zz.
\label{Eq:lift}
\end{equation}

The map described above and its diffusion coefficient were first studied in \cite{Gasp-Klages-98} using a different method. We call it the lifted Bernoulli shift map. This process of copying a map with a lift of degree one is a common way to create a diffusive map \cite{Geisel-1982,Grossman-1982,Schell-1982}. The use of the lift parameter $h$ ensures that the invariant probability density function (p.d.f) remains a constant function throughout the entire parameter range. This helps simplify the derivation of the diffusion coefficient as an invariant p.d.f is an essential ingredient in the Taylor-Green-Kubo formula that is used to derive the diffusion coefficient \cite{Dorfman-book,Klages-96}.

The remaining members of the family are created by changing the sign of the gradient in equation (\ref{Eq:M_h_box}). Let $W_h(x):[0,1]\rightarrow \rz$

\begin{equation}
W_h (x)=
\left\{
\begin{array}{rl}
-2x + h +1& 0\leq x <\frac{1}{2}\\
-2x +2 -h & \frac{1}{2}\leq x < 1\end{array}\right.  ,
\label{Eq:W_h_box}
\end{equation}

\noindent which we call the lifted negative Bernoulli shift map. Let $V_h(x):[0,1]\rightarrow \rz$

\begin{equation}
V_h (x)=
\left\{
\begin{array}{rl}
-2x +1+ h & 0\leq x <\frac{1}{2}\\
2x -1 -h & \frac{1}{2}\leq x < 1\end{array}\right.,
\label{Eq:V_h_box}
\end{equation}

\noindent which we call the lifted V map. Let $\Lambda_h(x):[0,1]\rightarrow \rz$

\begin{equation}
\Lambda_h (x)=
\left\{
\begin{array}{rl}
2x + h & 0\leq x <\frac{1}{2}\\
-2x +2 -h & \frac{1}{2}\leq x < 1\end{array}\right.  ,
\label{Eq:Lambda_h_box}
\end{equation}

\noindent which we call the lifted tent map. Again we apply the lift of degree one condition of equation (\ref{Eq:lift}) to equations (\ref{Eq:W_h_box}), (\ref{Eq:V_h_box}) and (\ref{Eq:Lambda_h_box}) to create spatially extended systems defined over the real line. See Figure \ref{fig:4maps} for an illustration.

\newpage

\begin{figure}[h!]
\begin{center}
  \includegraphics[width=6cm]{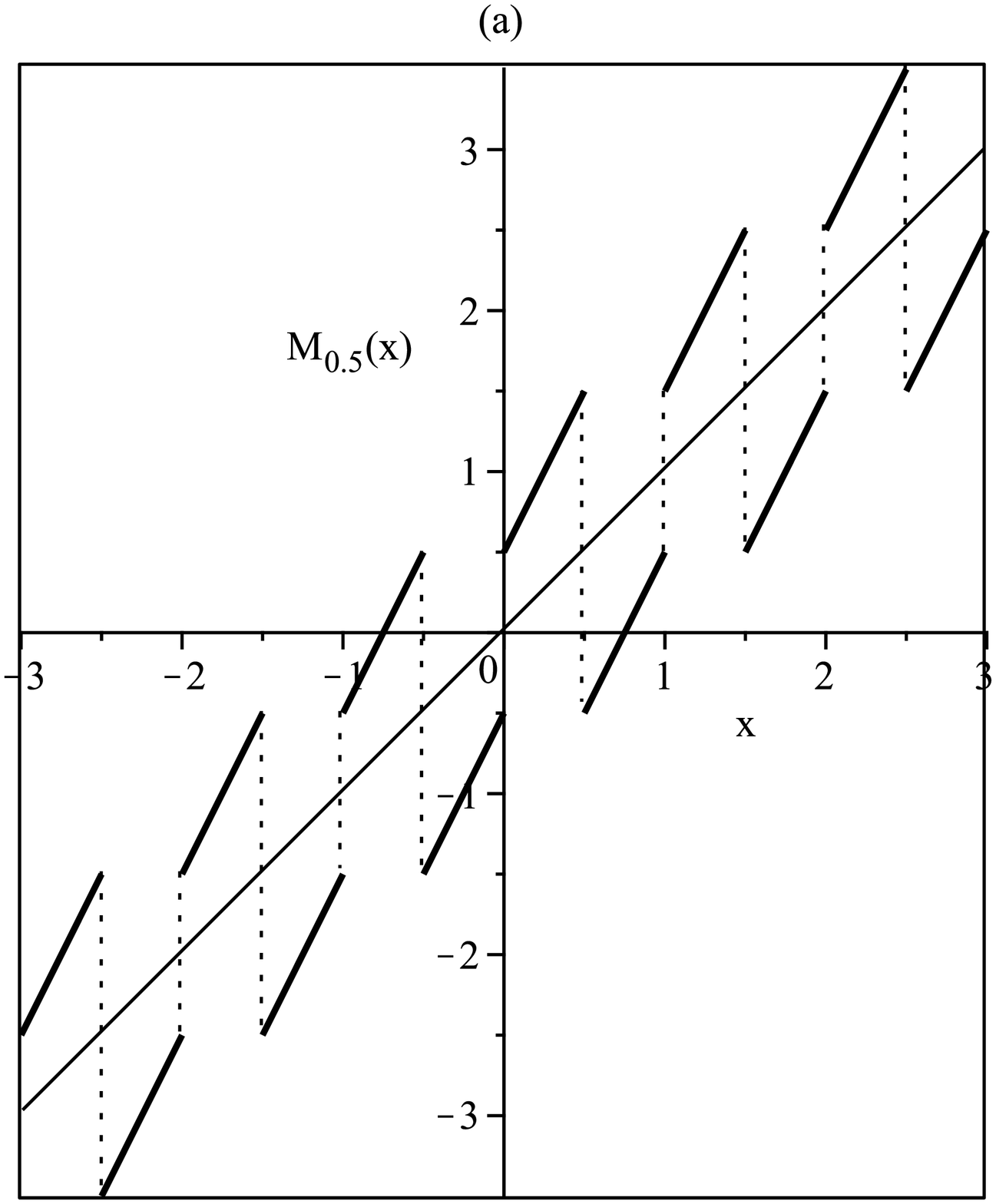} \includegraphics[width=6cm]{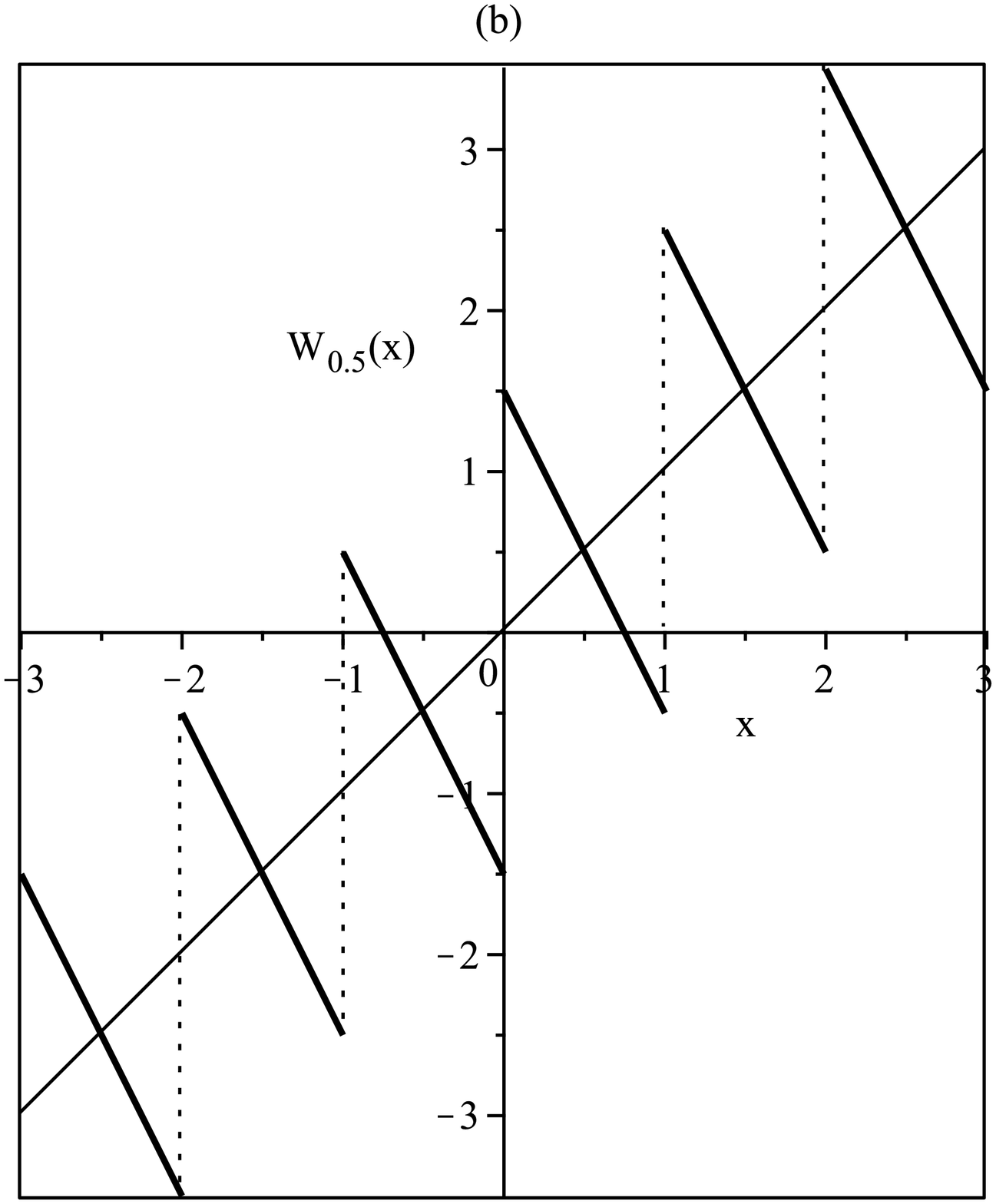}\\
\includegraphics[width=6cm]{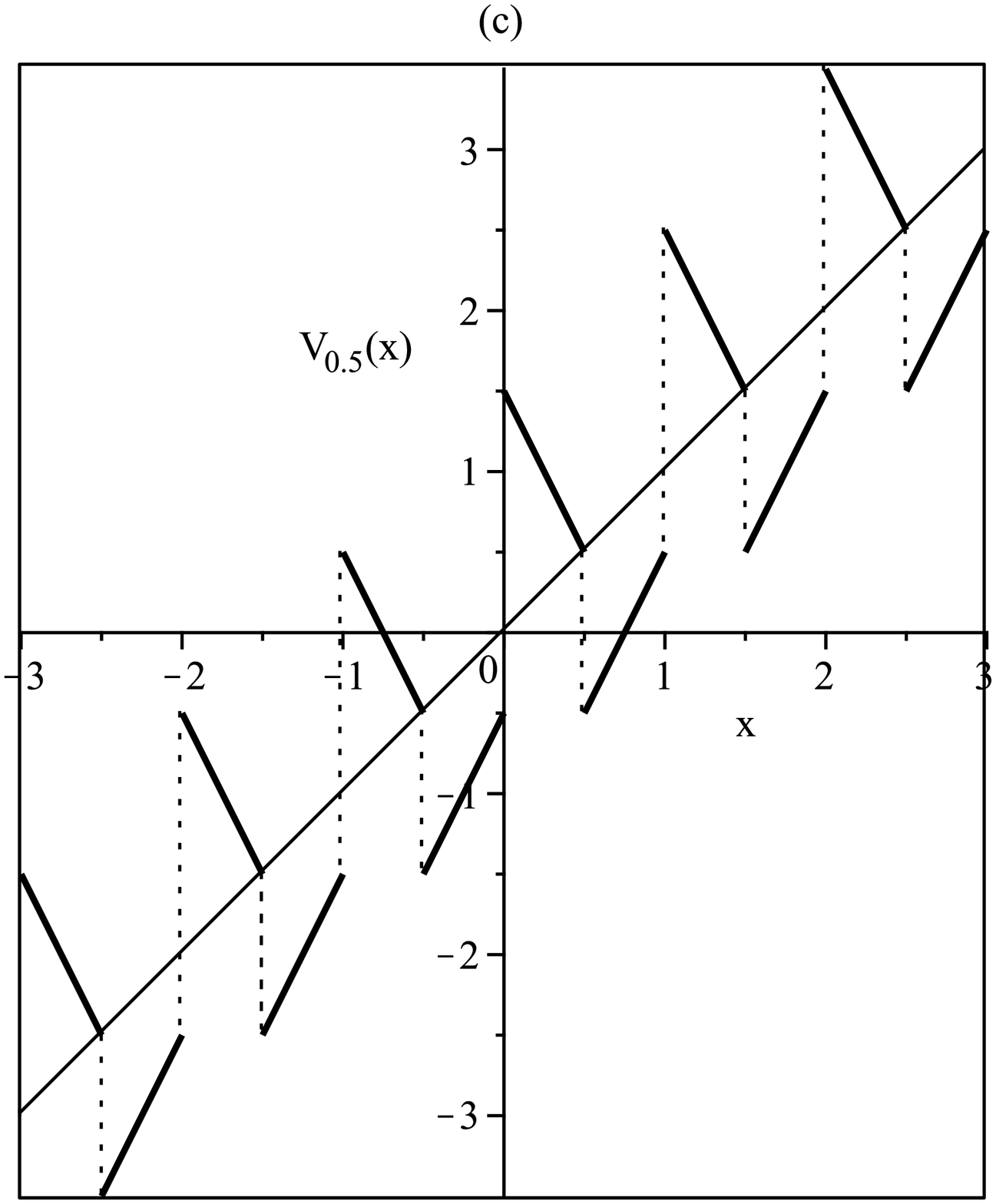} \includegraphics[width=6cm]{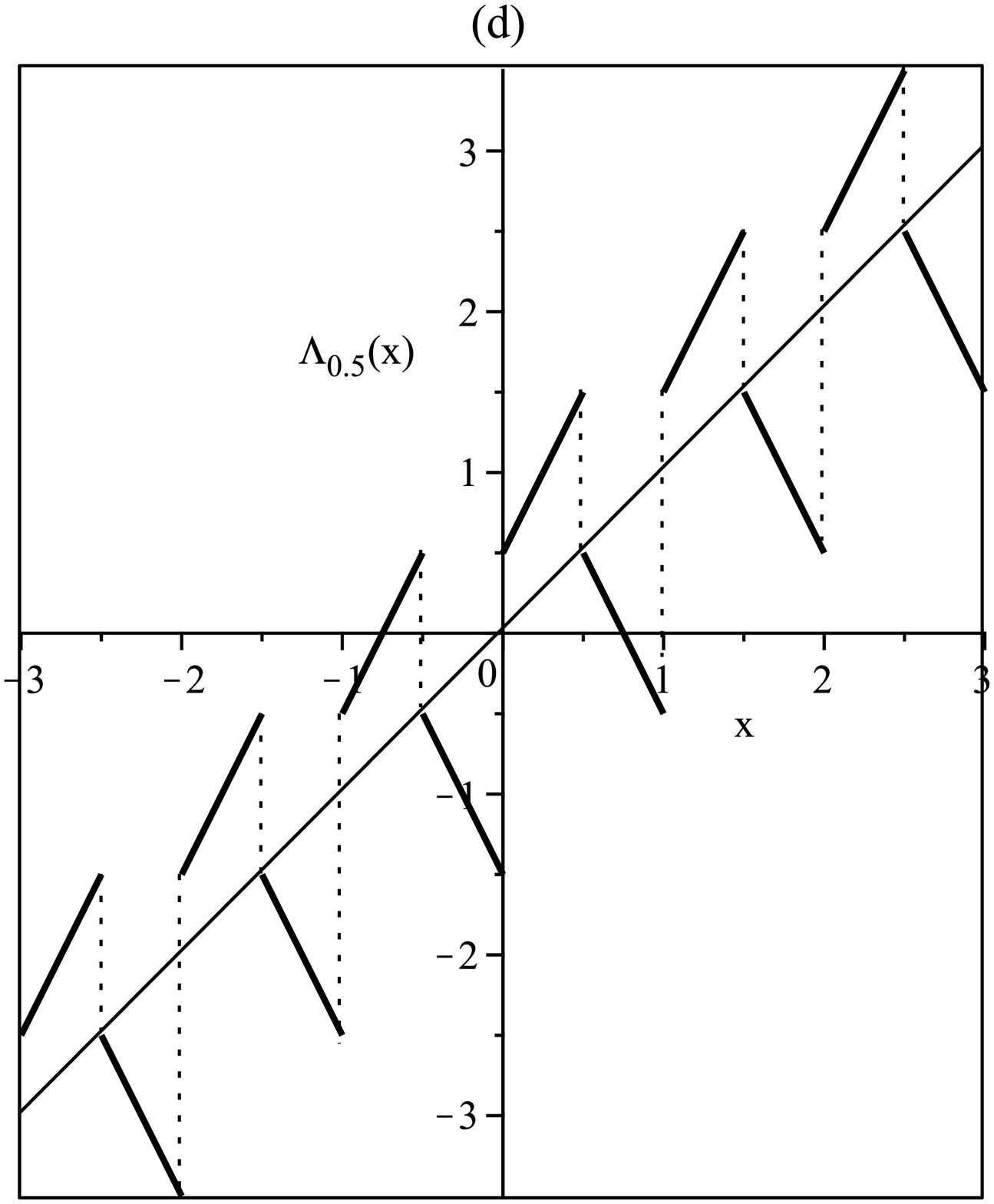}
\end{center}
\caption{\footnotesize{\emph{The family of maps}. In this figure, a section of each of the four maps under scrutiny is illustrated at a parameter value of $h=0.5$. In \textbf{(a)} the lifted Bernoulli shift, so-called because the Bernoulli shift is recovered when $h=0$ on the unit interval. In \textbf{(b)} the lifted negative Bernoulli shift, so-called because a version of the Bernoulli shift with a negative gradient is recovered when $h=0$. In \textbf{(c)} the lifted V map, so-called because a V map is recovered when $h=0$. In \textbf{(d)} the lifted tent map, so-called because a tent map is recovered when $h=0$.}}
\label{fig:4maps}
\end{figure}

\section{Deriving the diffusion coefficient}
\label{sec:diffcoeff}

The diffusion coefficient $D$ is given by Einstein's formula for diffusion in one dimension

\begin{equation}
D(h) = \lim_{n\to\infty}\frac{\left\langle (x_n-x_0)^{2}\right\rangle}{2n}
\label{Eq:Einstein_diff}
\end{equation}

\noindent where we are evaluating the diffusion coefficient as a function of the parameter $h$. The angular brackets $ \left\langle ... \right\rangle$ represent an average taken over the invariant p.d.f, $\rho^*(x)$

\begin{equation}
                \left\langle ... \right\rangle := \int_0^1 dx\rho^*(x)... \ \ .
\label{Eq:average}
\end{equation}

\noindent Equation (\ref{Eq:Einstein_diff}) can be transformed into the Taylor-Green-Kubo formula \cite{Dorfman-book}

\begin{equation}
D(h) = \lim_{n\to\infty} \left(\sum_{k=0}^n \left\langle \tilde{v}_0(x) \tilde{v}_k(x) \right\rangle\right) -\frac{1}{2}\left\langle \tilde{v}_0(x)^2 \right\rangle\
\label{Eq:TGK}
\end{equation}

\noindent where  $\tilde{v}_k(x):\: \rz \rightarrow \rz$,

\begin{equation}
               \tilde{v}_k(x):= x_{k+1} - x_k
\label{Eq:tildev}
\end{equation}

\noindent gives the displacement of a point $x$ at the $k^{th}$ iteration $x_k$. It can in fact be shown that we need only consider the integer value of the displacement of a point $x$  \cite{Klages-02}, i.e. we can replace $\tilde{v}_k(x)$ with the simpler function $v_k(x):\: \rz \rightarrow \zz$,

\begin{equation}
              v_k(x):=\lfloor x_{k+1} \rfloor-\lfloor x_k \rfloor\quad ,
\label{Eq:v_k}
\end{equation}

\noindent which for the lifted Bernoulli shift map $M_h(x)$ is

\begin{equation}
v_k (x)=
\left\{
\begin{array}{rl}
\left\lfloor h\right\rfloor & \ \ 0\leq \tilde{x}_k <\frac{1-\hat{h}}{2}\\
\left\lceil h\right\rceil & \ \ \frac{1-\hat{h}}{2} \leq \tilde{x}_k < \frac{1}{2}\\
-\left\lceil h\right\rceil & \ \  \frac{1}{2} \leq \tilde{x}_k < \frac{1+\hat{h}}{2}\\
-\left\lfloor h\right\rfloor & \ \ \frac{1+\hat{h}}{2}\leq \tilde{x}_k < 1\end{array}\right.
\label{Eq:v_h(x)bern}\quad ,
\end{equation}

\noindent where $\lfloor...\rfloor$ and $\lceil...\rceil$ are the floor function and ceiling function respectively and $\tilde{x}$ is $x$ modulo $1$. The function $\hat{h}: \rz\rightarrow [0,1]$ is defined as

\begin{equation}
\hat{h}=
\left\{
\begin{array}{rl}
1                & \ \ h \in \{\zz\}\\
h \mod{1}        & \ \ otherwise    \end{array}\right. \ \ .
\label{Eq:hath}
\end{equation}

 \noindent Equation (\ref{Eq:hath}) is simply a corrective function that ensures equation (\ref{Eq:v_h(x)bern}) is correct at the points of discontinuity. We can also use the fact that $\rho^*(x)=1$ in our model to simplify equation (\ref{Eq:TGK}) further to

\begin{equation}
D(h) = \lim_{n\to\infty} \left( \int_0^1 v_0(x) \sum_{k=0}^n v_k(x) \ \ dx \right) -\frac{1}{2}\int_0^1 v_0(x)^2 dx\ \ .
\label{Eq:TGK2}
\end{equation}

 \noindent Evaluating equation (\ref{Eq:TGK2}) will give us our diffusion coefficient. The second integral is simple enough, however the first integral is more involved. We will now explain the method we use for this.

 We first define a cumulative \quotemarks{jump function} $J_M^n(x):[0,1]\rightarrow \rz$ as

\begin{equation}
                J_M^n(x):= \sum_{k=0}^n v_k(x)
\label{Eq:J^n}
\end{equation}

\noindent which gives the integer displacement of a point $x$ after $n$ iterations. The subscript $M$ tells us we are considering the jump function for the lifted Bernoulli shift map $M_h(x)$. We substitute equation (\ref{Eq:J^n}) into equation (\ref{Eq:TGK2}) to obtain

\begin{equation}
                D_M(h) = \lim_{n \rightarrow \infty} \left(\int_0 ^1 v_0(x) J_M ^n (x)\ \  dx \right) -\frac{1}{2} \int_0^1 v_0^2(x)\ \  dx.
\label{Eq:Taylor-Green-Kubo}
\end{equation}

\noindent In order to extract the information we need from (\ref{Eq:Taylor-Green-Kubo}), we define $T_M(x):[0,1]\rightarrow \rz$ which integrates over the jump function, as in the first integral of (\ref{Eq:Taylor-Green-Kubo}),

\begin{equation}
               T_M(x):= \lim_{n \rightarrow \infty}T_M^n(x)=\lim_{n \rightarrow \infty}\int_0^x J_M^n(y)dy.
\label{Eq:Takagi_def}
\end{equation}

\noindent Equation (\ref{Eq:Takagi_def}) defines the \quotemarks{generalised Takagi functions} discussed above and in \cite{Klages-96}. Due to the chaotic nature of the maps, and in particular, the sensitive dependence on initial conditions, the jump function behaves very erratically for large $n$. This is reflected in $T_M(x)$ which becomes fractal in the limit as $n\rightarrow\infty$. We call these functions \quotemarks{generalised Takagi functions} because for the lifted Bernoulli shift map with parameter $h=1$, one obtains the function first studied by T.Takagi in 1903 \cite{Takagi-1903}. This function is also an example of a \quotemarks{De Rham} function \cite{DeRham-1957}. T.Takagi was interested in this function from an analytical perspective as a function that is both continuous and non-differentiable.

In order to control the chaotic nature of the jump function, we derive a functional recursive relation \cite{Klages-96}. For the lifted Bernoulli shift map

\begin{equation}
J_M ^n (x)      = v_0(x) +J_M ^{n-1}\left(\tilde{M}_h (x)\right) \ \
\label{Eq:J_hrecursive}
\end{equation}

\noindent where $\tilde{M}_h (x)$ is equation (\ref{Eq:M_h_box}) taken modulo $1$. In turn, we define a functional recursive relation for the generalised Takagi functions, by substituting equation (\ref{Eq:J_hrecursive}) into equation (\ref{Eq:Takagi_def})

\begin{eqnarray}\nonumber
                T_M(x)     &=&     \lim_{n\rightarrow\infty} T_M^n(x)\\
                           &=&     \lim_{n\rightarrow\infty} \left( t_M(x) + \frac{1}{2}T_M^{n-1}\left(\tilde{M}_h(x)\right) \right),
\label{Eq:T(x)_recursion1}
\end{eqnarray}

\noindent where $t_M(x):[0,1]\rightarrow \rz$ is defined as

\begin{eqnarray}\nonumber
                         t_M(x)   &:=& \int_0^x v_0(y)dy\\
                                  &=& xv_0(x) +c.
\label{Eq:t(x)}
\end{eqnarray}

\noindent The constant of integration $c$ is evaluated by requiring the continuity of the Takagi function and that $T(0)=T(1)=0$. We also use the symmetry of the Takagi functions to simplify the evaluation. For the lifted Bernoulli shift map we obtain the recursion relation

\begin{equation}
T_M (x)=
\left\{
\begin{array}{lc}
\frac{1}{2}T_M \left(2x + \hat{h}\right) +\left\lfloor h\right\rfloor  x -\frac{1}{2}T_M (\hat{h})& \ \ 0\leq x <\frac{1-\hat{h}}{2}\\
\frac{1}{2}T_M \left(2x +\hat{h} -1\right) + \left\lceil h\right\rceil  x + \frac{\hat{h}-1}{2} - \frac{1}{2}T_M (\hat{h})& \ \ \frac{1-\hat{h}}{2} \leq x < \frac{1}{2}\\
\frac{1}{2}T_M \left(2x -\hat{h}\right) + \frac{1+\hat{h}}{2} - \left\lceil h\right\rceil  x - \frac{1}{2}T_M (\hat{h})& \ \ \frac{1}{2} \leq x < \frac{1+\hat{h}}{2}\\
\frac{1}{2}T_M \left(2x -1 -\hat{h}\right) -\left\lfloor h\right\rfloor  x + \left\lfloor h\right\rfloor - \frac{1}{2}T_M (\hat{h})& \ \ \frac{1+\hat{h}}{2}\leq x \leq 1\end{array}\right. \:,
\label{Eq:T_hfull}
\end{equation}

\noindent where we have taken the limit $n\rightarrow\infty$. We now have the ingredients that we need to derive the parameter dependent diffusion coefficient. We firstly apply equation (\ref{Eq:v_h(x)bern}) to the Taylor-Green-Kubo formula to obtain

\begin{eqnarray}\nonumber
D_M(h) &=& \lim_{n \rightarrow \infty} \left(\int_{0} ^{1} v_0 (x) J_M ^n (x)\ \  dx\right)- \frac{1}{2} \int_{0} ^{1} v_0 ^2 (x)\ \  dx\\
     &=& \lim_{n \rightarrow \infty}  \int_{0} ^{\frac{1-\hat{h}}{2}}\left\lfloor h\right\rfloor  J_M ^n (x)\ \  dx +
         \int_{\frac{1-\hat{h}}{2}} ^{\frac{1}{2}}\left\lceil h\right\rceil  J_M ^n (x)\ \  dx -
         \int_{\frac{1}{2}} ^{\frac{1+\hat{h}}{2}}\left\lceil h\right\rceil  J_M ^n (x)\ \  dx\\\nonumber
     &-& \int_{\frac{1+\hat{h}}{2}} ^{1} \left\lfloor h\right\rfloor  J_M ^n (x)\ \  dx   \ \
          - \ \ \frac{1}{2}\left( \int_{0}^{\frac{1-\hat{h}}{2}}\left\lfloor h\right\rfloor ^2+
         \int_{\frac{1-\hat{h}}{2}} ^{\frac{1}{2}}\left\lceil h\right\rceil ^2+
         \int_{\frac{1}{2}} ^{\frac{1+\hat{h}}{2}}\left\lceil h\right\rceil ^2+
         \int_{\frac{1+\hat{h}}{2}} ^{1} \left\lfloor h\right\rfloor ^2 \right).
\label{Eq:D(h)_der_1}
\end{eqnarray}

\noindent Evaluating the integrals, simplifying using equation (\ref{Eq:T_hfull}) and gathering relevant terms we obtain

\begin{eqnarray}
D_M(h) &=& \left(\left\lfloor h\right\rfloor - \left\lceil h\right\rceil\right)\left(\left\lfloor h\right\rfloor-
         \hat{h}  \left\lfloor h\right\rfloor -T_M\left(\hat{h}\right)\right) +\left\lceil h\right\rceil \left( \left\lceil h\right\rceil + \hat{h} -1             \right)\\\nonumber
     &-& \frac{1}{2}\left(\left\lfloor h\right\rfloor ^2\left(1-\hat{h}\right)+\left\lceil h\right\rceil ^2 \hat{h}\right),\nonumber
\label{Eq:D(h)_der_2}
\end{eqnarray}

\noindent which after some wrangling can be rewritten as

\begin{equation}
D_M(h)     = \frac{\left\lceil h\right\rceil ^2}{2}+\left(\frac{1-\hat{h}}{2}\right)\left(1-2  \left\lceil h\right\rceil \right)+T_M\left(\hat{h}\right) \ \ .
\label{Eq:D(h)_der_3}
\end{equation}

\noindent Equation (\ref{Eq:D(h)_der_3}) is an analytic expression for the parameter dependent diffusion coefficient for the lifted Bernoulli shift map. The first two terms in equation (\ref{Eq:D(h)_der_3}) form a piecewise linear function that is asymptotically equal to $\frac{h^2}{2}$ for large $h$. This term tells us how the diffusion coefficient grows for large $h$. Interestingly, our shifted map belongs to a different universality class compared to the maps studied in \cite{Klages-96,Klages-97} where the gradient was varied as a parameter. There it was conjectured that the parameter dependent diffusion coefficient for this class of maps would grow quadratically with the height parameter $h$ with a coefficient of $\frac{1}{6}$.

The $T_M\left(\hat{h}\right)$ term tells us about the fine scale
structure. This fine scale structure is periodic modulo $1$ as it is a
function of $\hat{h}$, see figure \ref{Fig:D(h)_bern_periodicity} for
an illustration. We also observe a second region of asymptotic
behaviour where the diffusion coefficient tends to $\frac{h}{2}$ as $h
\rightarrow 0$, i.e.\ it behaves linearly and reproduces a simple
random walk result for diffusion. This result agrees with the findings
in \cite{Klages-96,Klages-97}, it tells us that the higher order
correlations of our system are negligible as $h \rightarrow 0$. This
change from one type of asymptotic behaviour to another in a system
was denoted as a crossover in deterministic diffusion
\cite{Klages-97}.

Although the calculations have been presented for the lifted Bernoulli
shift, the method is the same for the other three maps. If we turn our
attention to the lifted negative Bernoulli shift map $W_h(x)$ the
Takagi function is given by

\begin{equation}
T_W (x)=
\left\{
\begin{array}{lc}
-\frac{1}{2}T_W \left(-2x + \hat{h}\right) +\left\lceil h\right\rceil  x+ \frac{1}{2}T_W (\hat{h})& \ \ 0\leq x <\frac{\hat{h}}{2}\\
-\frac{1}{2}T_W \left(-2x +\hat{h} +1\right) +\left\lfloor h\right\rfloor  x+ \frac{\hat{h}}{2}\left( \lceil h \rceil - \lfloor h \rfloor\right) + \frac{1}{2}T_W (\hat{h})& \ \ \frac{\hat{h}}{2} \leq x < \frac{1}{2}\\
-\frac{1}{2}T_W \left(-2x +2-\hat{h}\right)- \left\lfloor h\right\rfloor x  +\lfloor h \rfloor+  \frac{\hat{h}}{2}\left( \lceil h \rceil - \lfloor h \rfloor\right)+\frac{1}{2}T_W (\hat{h})& \ \ \frac{1}{2} \leq x < 1-\frac{\hat{h}}{2}\\
-\frac{1}{2}T_W \left(-2x +3 -\hat{h}\right)-\left\lceil h\right\rceil  x  + \left\lceil h\right\rceil + \frac{1}{2}T_W (\hat{h})& \ \ 1-\frac{\hat{h}}{2}\leq x \leq 1\end{array}\right.
\label{Eq:T_hfull_negbern}
\end{equation}

\noindent and the corresponding expression for the parameter dependent diffusion coefficient is,

\begin{equation}
                D_W(h)= \frac{\left\lfloor h \right \rfloor ^2}{2} + \frac{\hat{h}}{2}\left( \left\lceil h \right\rceil ^2 -\lfloor h \rfloor ^2\right) +T_W(\hat{h}).
\label{Eq:D(h)_negbern}
\end{equation}

\noindent The first two parts of equation (\ref{Eq:D(h)_negbern}) form the same piecewise linear function found in equation (\ref{Eq:D(h)_der_3}) so as $h \rightarrow \infty$ we observe the same asymptotic behaviour found in the lifted Bernoulli shift map, see figure \ref{Fig:D(h)_bern_periodicity}.



\begin{figure}[h]
\includegraphics[width=7cm]{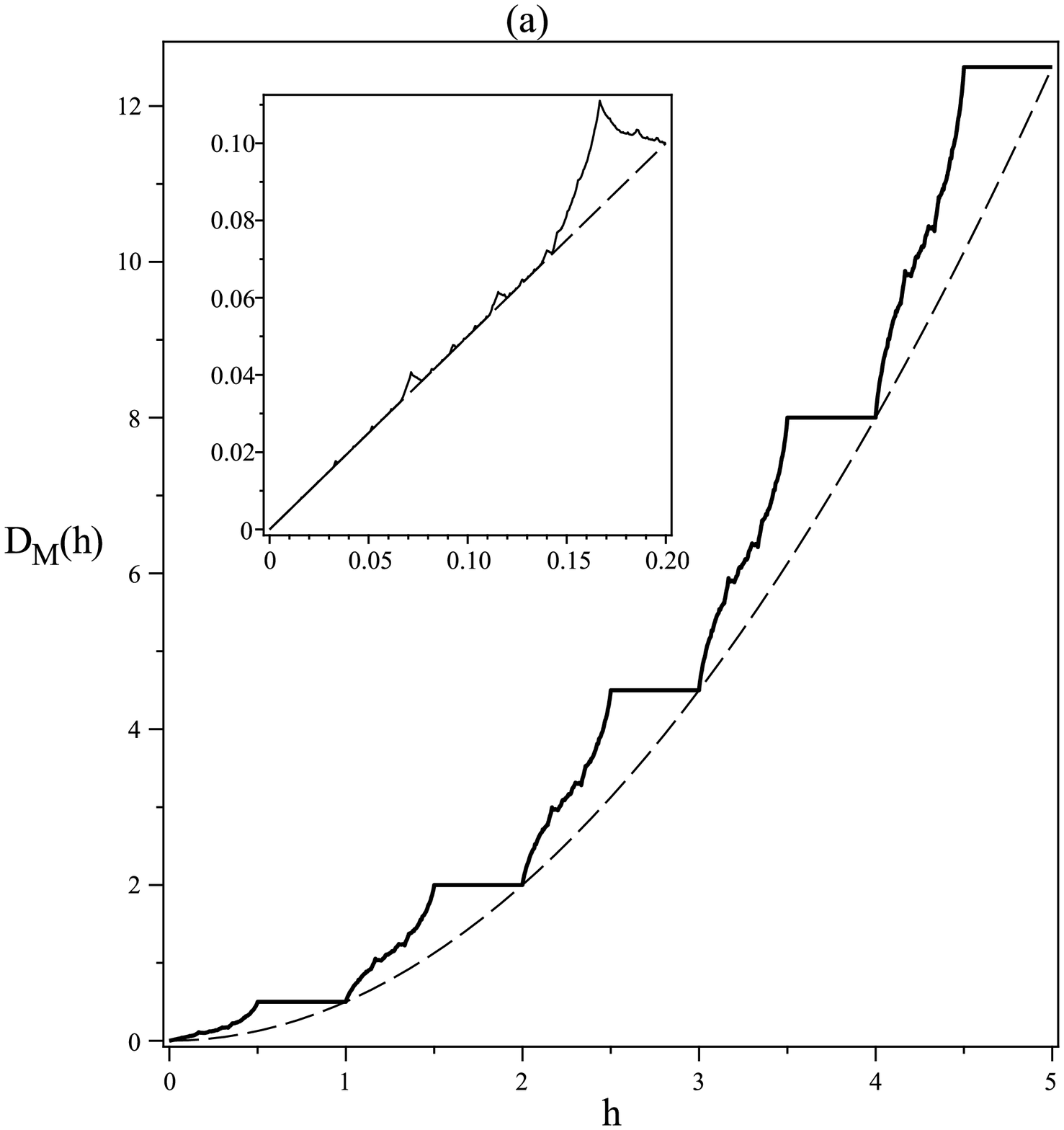}  \includegraphics[width=7cm]{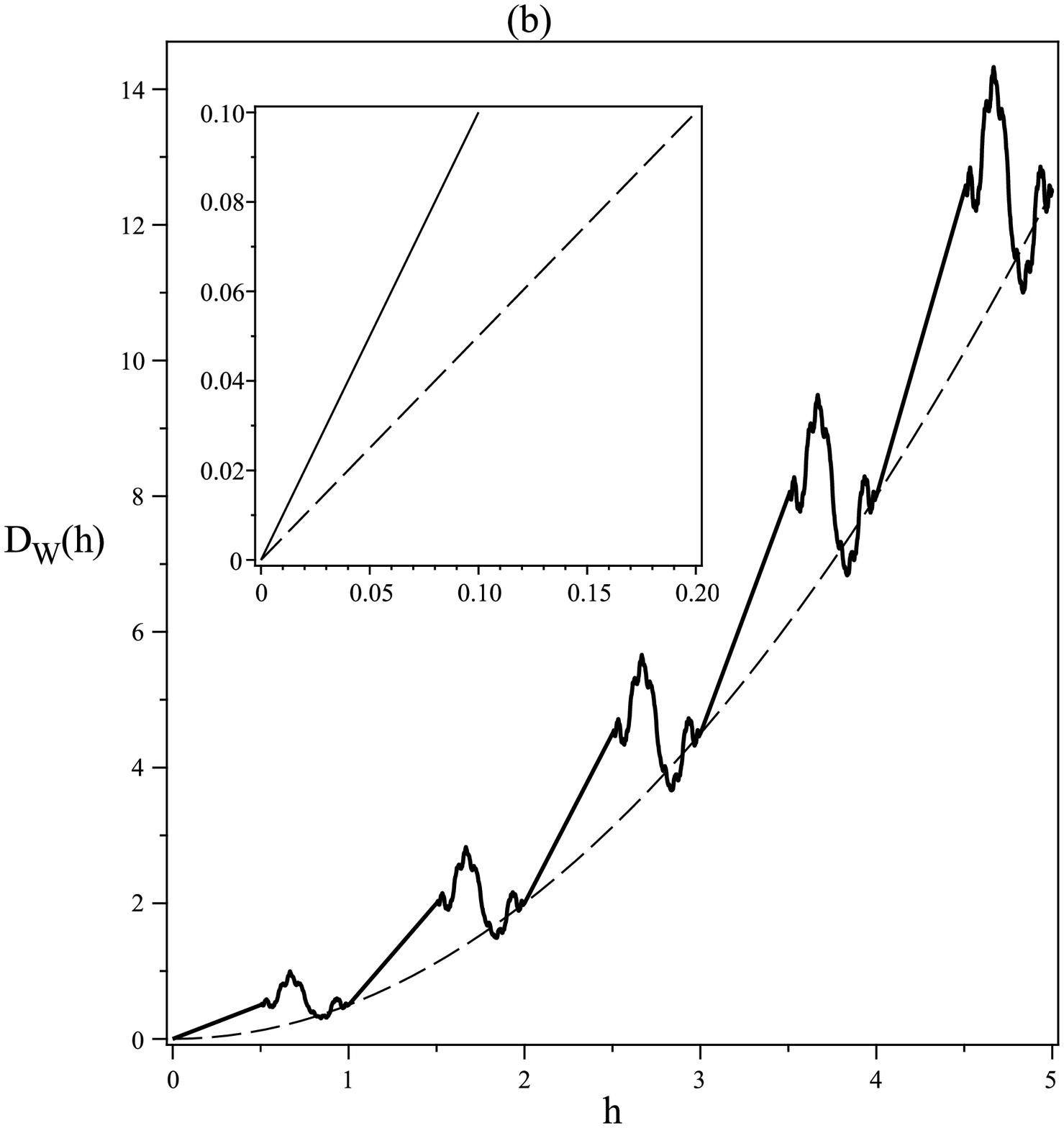}
\caption{\footnotesize{\emph{Large scale structure and asymptotic behaviour}. In this figure, the diffusion coefficients for the lifted Bernoulli shift (a) and the lifted negative Bernoulli shift (b) are illustrated. Also included in both is $f(h)=h^2/2$ to show how the function grows for large $h$. Note the periodicity of the fine scale structure. Inset in both is an illustration of the asymptotic behaviour as $h \rightarrow 0$.The dashed lines are $\frac{h}{2}$} to show the different behaviour in the two maps.}
\label{Fig:D(h)_bern_periodicity}
\end{figure}

The parameter dependent Takagi functions for the lifted tent map and lifted V map are different in character to the two Bernoulli shift maps discussed above. In order to emphasise this difference we restrict the parameter to $h \in [0,1]$. For the lifted tent map the Takagi function is

\begin{equation}
T_\Lambda (x)=
\left\{
\begin{array}{ll}
\frac{1}{2}T_\Lambda(2x+h)-\frac{1}{2}T_\Lambda(h)                           &\ \ 0 \leq x < \frac{1-h}{2}\\
x+\frac{1}{2}T_\Lambda(2x+h-1)+ \frac{h-1}{2}-\frac{1}{2}T_\Lambda(h)        &\ \ \frac{1-h}{2} \leq x < \frac{1}{2}\\
-\frac{1}{2}T_\Lambda(-2x+2-h)+\frac{h}{2}+\frac{1}{2}T_\Lambda(1-h)         &\ \ \frac{1}{2}\leq x < 1-\frac{h}{2}\\
-x-\frac{1}{2}T_\Lambda(-2x+3-h)+1+\frac{1}{2}T_\Lambda(1-h)                 &\ \ 1-\frac{h}{2} \leq x < 1 \end{array}\right. \ \ .
\label{Eq:T(h)_tent}
\end{equation}

\noindent Whilst the Takagi function for the lifted V map looks like

\begin{equation}
T_V (x)=
\left\{
\begin{array}{ll}
x-\frac{1}{2}T_V(-2x+h)+\frac{1}{2}T_V(h)                                      &\ \ 0 \leq x < \frac{h}{2}\\
-\frac{1}{2}T_V(-2x+1+h)+ \frac{h}{2}+\frac{1}{2}T_V(h)                        &\ \ \frac{h}{2} \leq x < \frac{1}{2}\\
-x+\frac{1}{2}T_V(2x-h)+\frac{h}{2}+\frac{1}{2}-\frac{1}{2}T_V(1-h)            &\ \ \frac{1}{2}\leq x < \frac{1+h}{2}\\
\frac{1}{2}T_V(2x-1-h)-\frac{1}{2}T_V(1-h)                                     &\ \ \frac{1+h}{2} \leq x < 1 \end{array}\right. \ \ .
\label{Eq:T(h)_V}
\end{equation}

\noindent The thing to notice here is the inclusion of the $T(1-h)$ terms. Usually we would use the symmetry of the Takagi functions to equate these terms with $T(h)$, but the Takagi functions for the lifted V map and lifted tent map do not have this symmetry. We will discuss the consequences of this and the diffusion coefficients for these two maps in sections \ref{sec:Eval_tak} and \ref{sec:struc_diff}.

\section{Evaluating the Takagi functions}
\label{sec:Eval_tak}

\begin{figure}[h!]
\begin{center}
  \includegraphics[width=7cm]{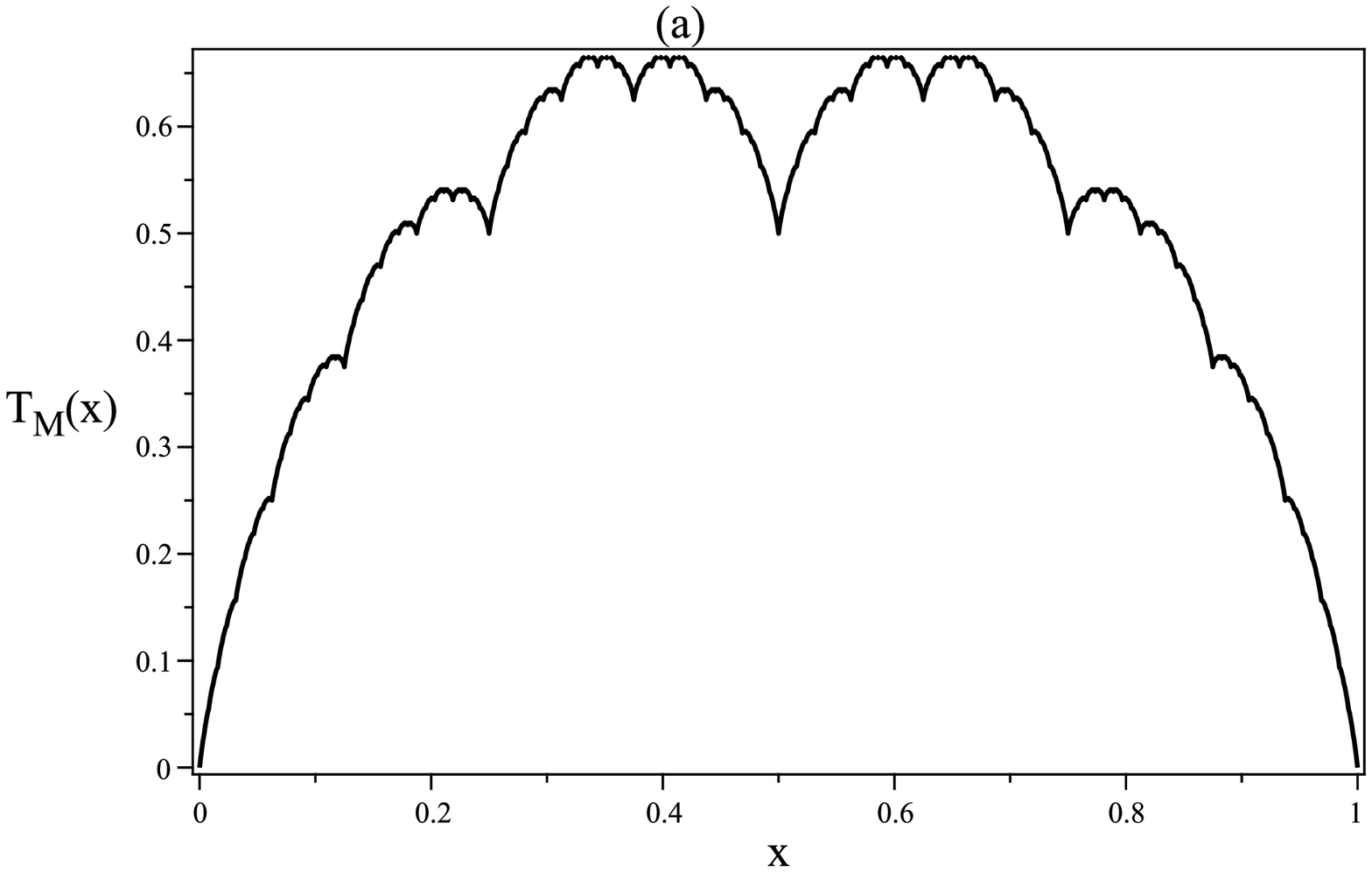} \includegraphics[width=7cm]{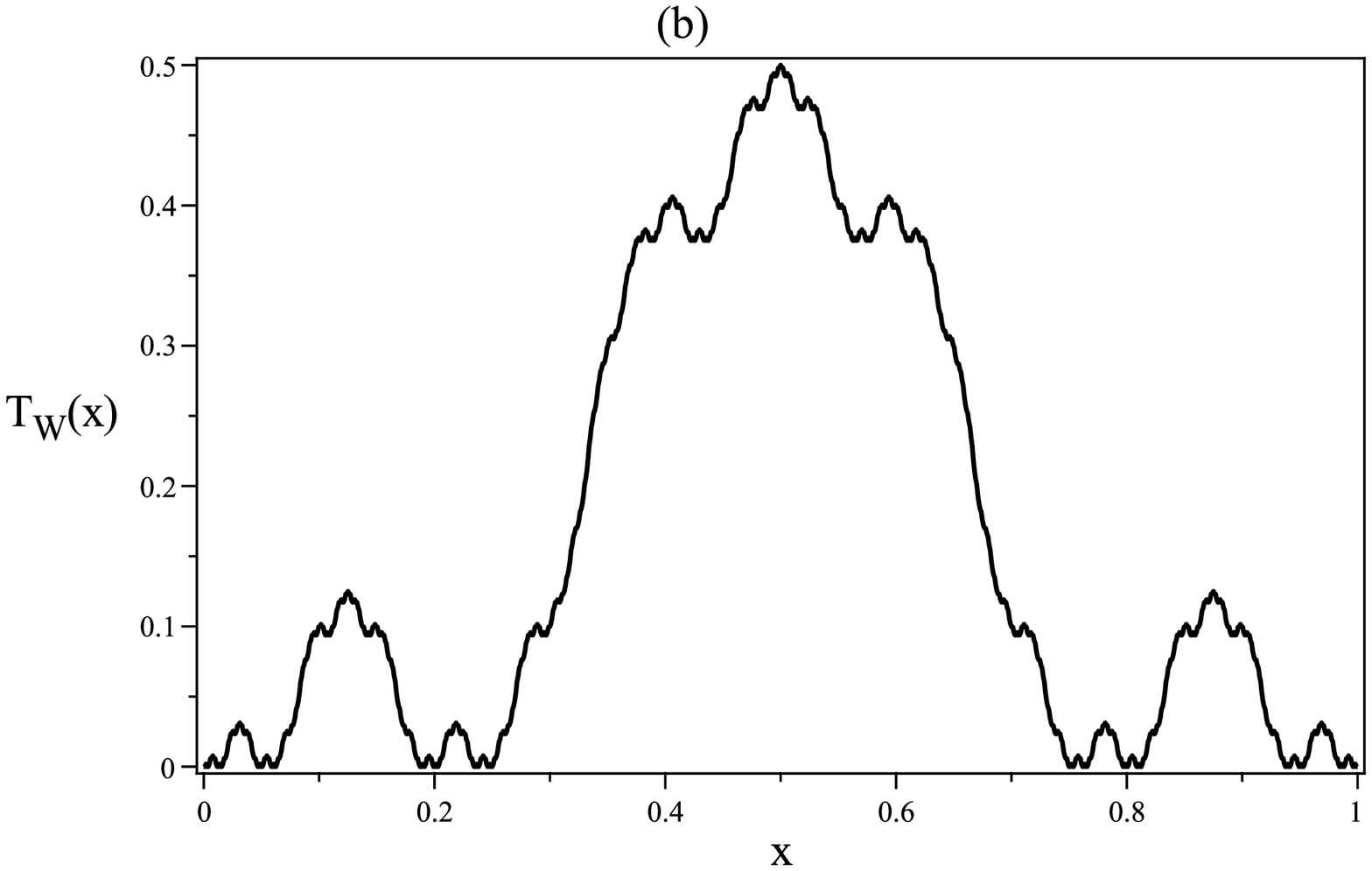}\\
\includegraphics[width=7cm]{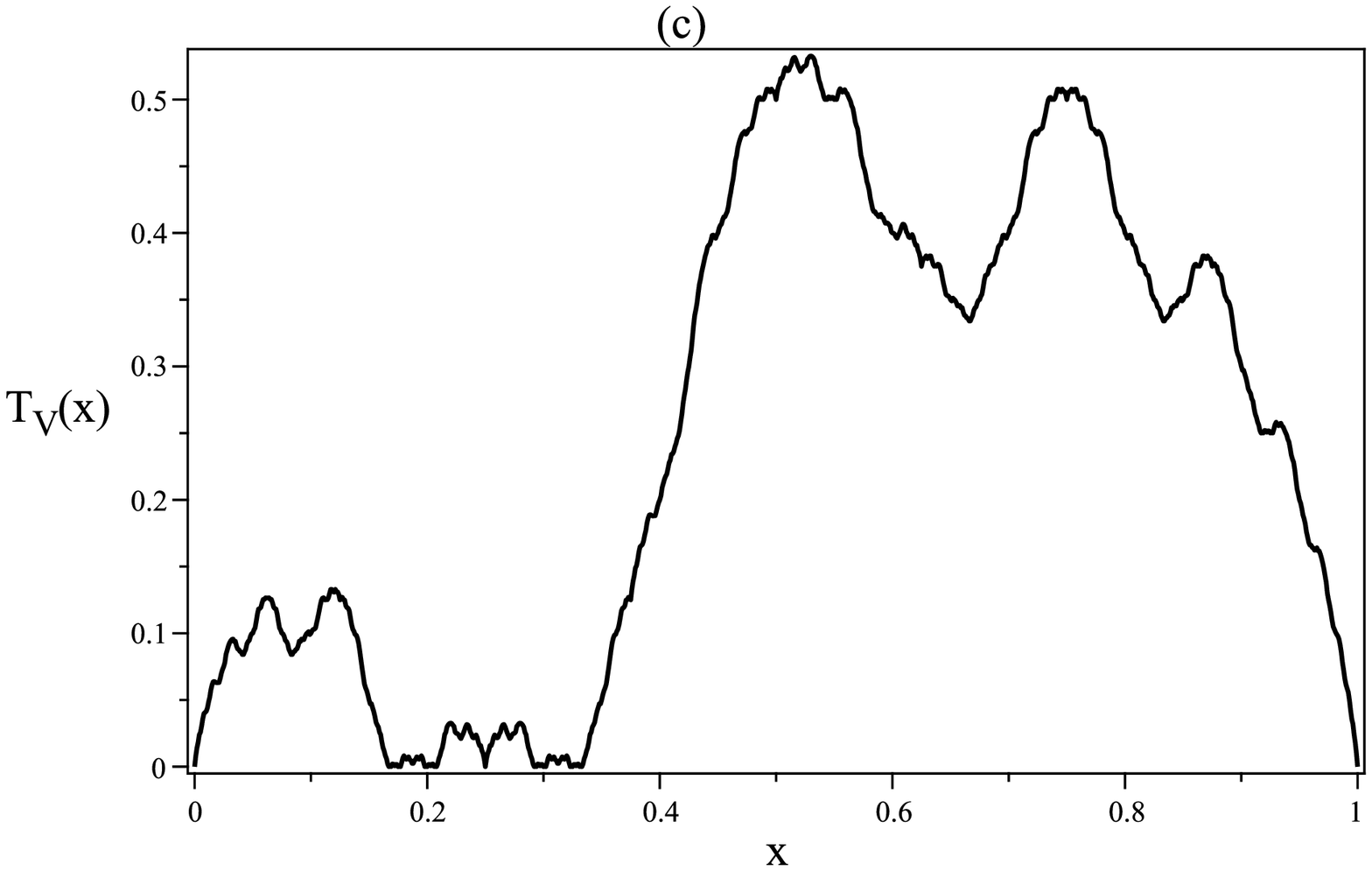} \includegraphics[width=7cm]{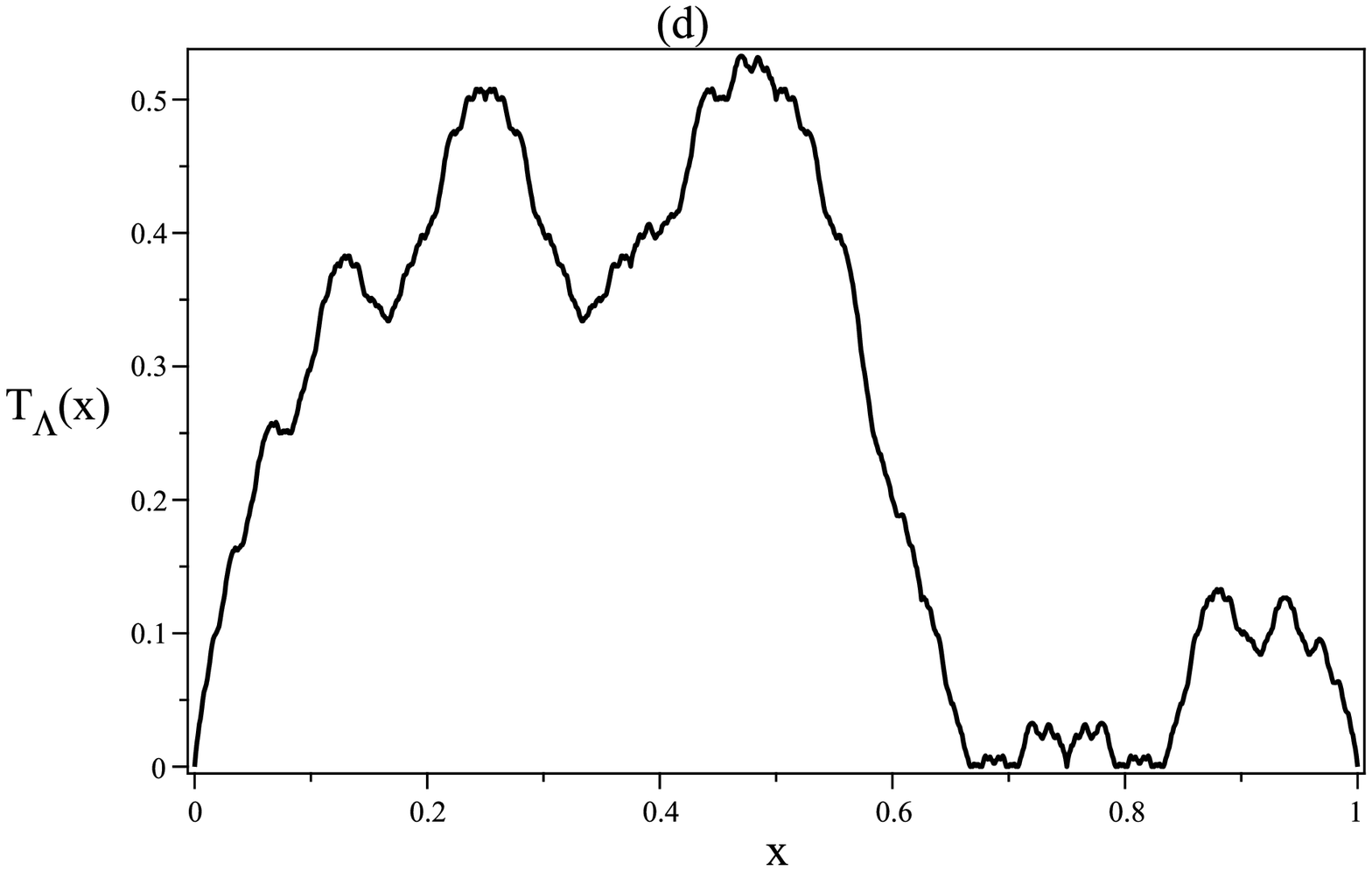}
\end{center}
\caption{\footnotesize{\emph{The Takagi functions}. In this figure, the Takagi functions are shown for the four maps at a parameter value of $h=1$. In \textbf{(a)}; the lifted Bernoulli shift, which you may recognise as the famous Takagi function. In \textbf{(b)}; the negative Bernoulli shift. In \textbf{(c)}; the lifted V map. In \textbf{(d)}; the lifted tent map. Note the self similarity and \quotemarks{fractal} structure. Note also the asymmetry in \textbf{(c)} and \textbf{(d)} compared to \textbf{(a)} and \textbf{(b)}, this is due to the asymmetry in the evolution of the p.d.f for these maps. Also portrayed here is that $T_V(x)= T_{\Lambda}(1-x)$, this result will be explained.}}
\label{fig:4_Tak}
\end{figure}


In this section we will show how to evaluate the Takagi functions and see what some of them look like in figure \ref{fig:4_Tak}. We will start with the Takagi functions for the lifted Bernoulli shift map (equation (\ref{Eq:T_hfull})). We can repeatedly apply the recursion relation in equation (\ref{Eq:T(x)_recursion1}) and obtain a sum

\begin{eqnarray}\nonumber
               T_M(x) &=& t_M(x)+ \frac{1}{2}T_M(\tilde{M}_h(x))-\frac{1}{2}T_M(h)\\ \nonumber
                      &=& t_M(x)+ \frac{1}{2}t_M(\tilde{M}_h(x)) +\frac{1}{4}T_M(\tilde{M}_h^2(x))-\frac{1}{2}T_M(h)-\frac{1}{4}T_M(h)\\
                      &=& \sum_{k=0}^{\infty}\frac{1}{2^k} t_M\left(\tilde{M}_h^k(x)\right) -T_M(h)
\label{Eq:Tak_sum_1}\:,
\end{eqnarray}

\noindent where

\begin{equation}
t_M (x):=
\left\{
\begin{array}{rl}
\left\lfloor h\right\rfloor x                                        & \ \ 0\leq x <\frac{1-\hat{h}}{2}\\
\frac{\hat{h}-1}{2} +\left\lceil h\right\rceil  x                    & \ \ \frac{1-\hat{h}}{2} \leq x < \frac{1}{2}\\
\frac{1+\hat{h}}{2} - \left\lceil h\right\rceil  x                   & \ \ \frac{1}{2} \leq x < \frac{1+\hat{h}}{2}\\
\left\lfloor h\right\rfloor -   \left\lfloor h\right\rfloor  x       & \ \ \frac{1+\hat{h}}{2}\leq x \leq 1\end{array}\right. \ .
\label{Eq:t_M(x)}
\end{equation}

\noindent We can remove the $T_M(h)$ term in equation (\ref{Eq:Tak_sum_1}) to obtain an infinite sum in terms of equation (\ref{Eq:t_M(x)})

\begin{eqnarray}\nonumber
               T_M(x)   &=& \sum_{k=0}^{\infty}\frac{1}{2^k} t_M\left(\tilde{M}_h^k(x)\right) -\frac{1}{2}\left(T_M(h)+T_M\left(h\right)\right)\\ \nonumber
                        &=& \sum_{k=0}^{\infty}\frac{1}{2^k} t_M\left(\tilde{M}_h^k(x)\right) -\frac{1}{2}T_M\left(h\right)-\frac{1}{2}\left(\sum_{k=0}^{\infty}\frac{1}{2^k} t_M\left(\tilde{M}_h^k(h)\right) -T_M(h)\right)\\
                        &=&\sum_{k=0}^{\infty}\frac{1}{2^k}\left(t_M\left(\tilde{M}_h^k(x)\right)-\frac{1}{2}t_M\left(\tilde{M}_h^k(h)\right)\right).
\label{Eq:T_M_sum}
\end{eqnarray}

The Takagi functions for the lifted negative Bernoulli shift map can be evaluated in a similar way. We use the same recursive definition to obtain

\begin{eqnarray}\nonumber
               T_W(x)&=& \sum_{k=0}^{\infty} \left(\frac{-1}{2}\right)^k t_W(x)+ \sum_{k=0}^{\infty}\left(\frac{-1}{2}\right)^kT_W(h) \\
                     &=& \sum_{k=0}^{\infty} \left(\frac{-1}{2}\right)^k t_W(x)+ \frac{1}{3}T_W(h)
\label{Eq:Tak_W_sum_1}
\end{eqnarray}

\noindent where

\begin{equation}
t_W (x):=
\left\{
\begin{array}{rl}
\left\lceil h\right\rceil  x                                                                                             & \ \ 0\leq x <\frac{\hat{h}}{2}\\
\frac{\hat{h}}{2}\left( \lceil h \rceil - \lfloor h \rfloor\right)+\left\lfloor h\right\rfloor  x                       & \ \ \frac{\hat{h}}{2} \leq x < \frac{1}{2}\\
\lfloor h\rfloor+\frac{\hat{h}}{2}\left( \lceil h \rceil - \lfloor h \rfloor\right)- \left\lfloor h\right\rfloor x      & \ \ \frac{1}{2} \leq x < 1-\frac{\hat{h}}{2}\\
\left\lceil h\right\rceil -\left\lceil h\right\rceil  x                                         & \ \ 1-\frac{\hat{h}}{2}\leq x \leq 1\end{array}\right. \ .
\label{Eq:t_W(x)}
\end{equation}

\noindent We can again remove $T_W(h)$ to obtain an infinite sum in terms of equation (\ref{Eq:t_W(x)}).

\begin{equation}
               T_W(x)   = \sum_{k=0}^{\infty}\left(\frac{-1}{2}\right)^k \left(t_W\left(\tilde{W}_h^k(x)\right) +\frac{1}{2}t_W\left(\tilde{W}_h^k(h)\right)\right).
\label{Eq:T_W_sum}
\end{equation}

Things are not so simple when we try to evaluate the Tagagi functions for the lifted V map and the lifted Tent map. The fact that we have both a positive and a negative gradient in $V_h(x)$ and $\Lambda_h(x)$ makes it harder to simplify the recursion relation into one single sum. If we take the Takagi functions for the lifted V map as an example and let

\begin{equation}
f (x):=
\left\{
\begin{array}{rl}
-1               & \ \ 0\leq x <\frac{1}{2}\\
1                & \ \ \frac{1}{2}\leq x \leq 1\end{array}\right. \ ,
\label{Eq:f(x)}
\end{equation}

\begin{equation}
g (x):=
\left\{
\begin{array}{rl}
T_V(h)           & \ \ 0\leq x <\frac{1}{2}\\
-T_V(1-h)        & \ \ \frac{1}{2}\leq x \leq 1\end{array}\right. \ ,
\label{Eq:g(x)}
\end{equation}

\noindent then the sum for the Takagi function that we obtain is

\begin{equation}
               T_V(x)= t_V(x) + \frac{1}{2}g(x)+ \sum_{k=1}^{\infty}\frac{f(\tilde{V}_h^{k-1}(x))}{2^k}\left(t_V(\tilde{V}_h^k(x))+\frac{1}{2}g(\tilde{V}_h^k(x))\right)
\label{Eq:T_V_sum}
\end{equation}

\noindent where

\begin{equation}
t_V (x):=
\left\{
\begin{array}{rl}
x                & \ \ 0\leq x <\frac{h}{2}\\
\frac{h}{2}      & \ \ \frac{h}{2} \leq x < \frac{1}{2}\\
\frac{h+1}{2} -x & \ \ \frac{1}{2} \leq x < \frac{1+h}{2}\\
0                & \ \ \frac{1+h}{2}\leq x \leq 1\end{array}\right. \ . \ \ 0 \leq h \leq 1.
\label{Eq:t_V(x)}
\end{equation}

\noindent So this difficulty is the first consequence of the asymmetry in these maps. We can however simplify the Takagi functions for the lifted V map and the lifted tent map by setting $h=1$ and hence evaluate them. See figure \ref{fig:4_Tak} for an illustration of some of these interesting functions.

\section{The structure of the diffusion coefficients}
\label{sec:struc_diff}

\begin{figure}[h]
\begin{center}
  \includegraphics[width=7cm]{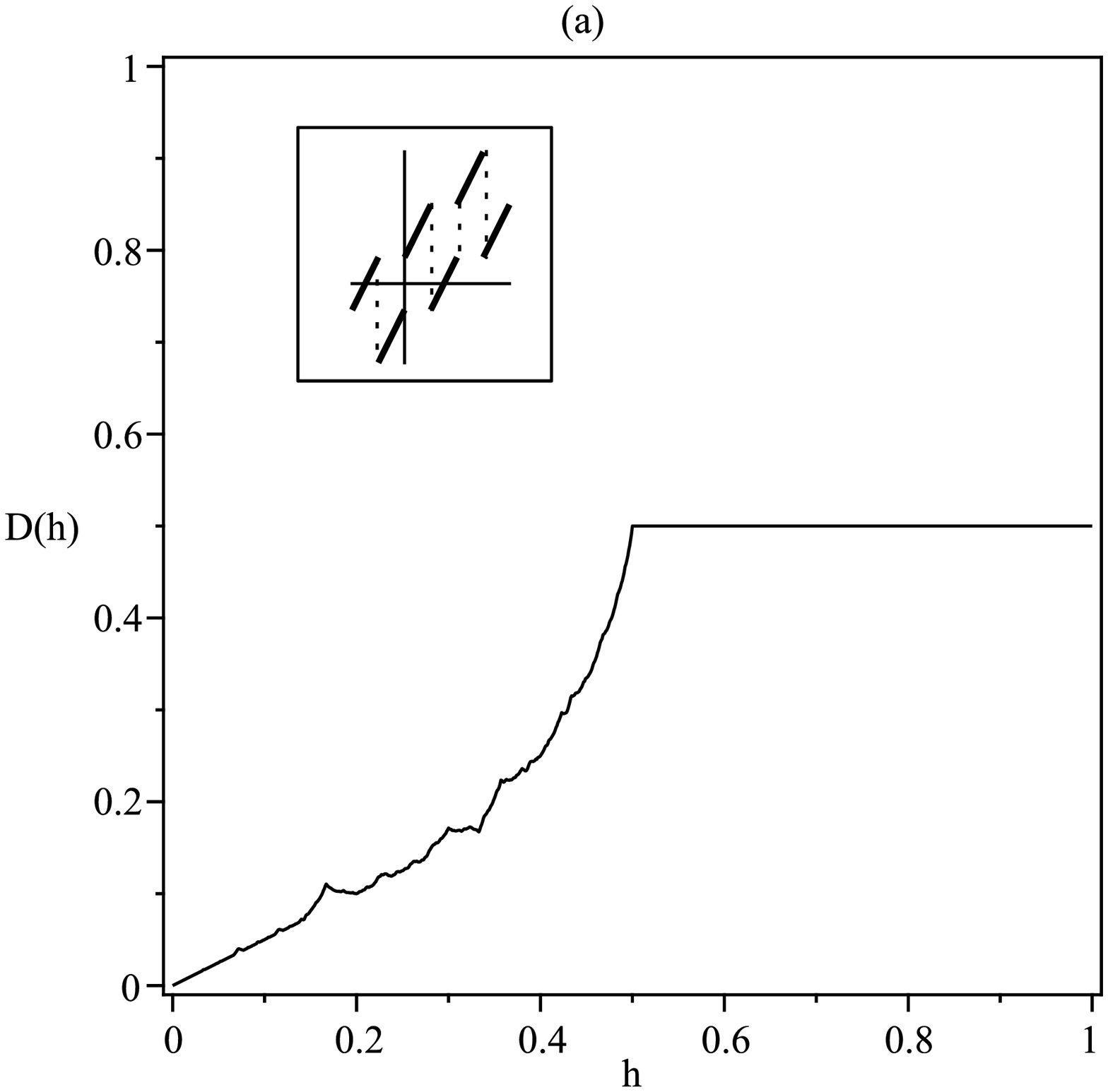} \includegraphics[width=7cm]{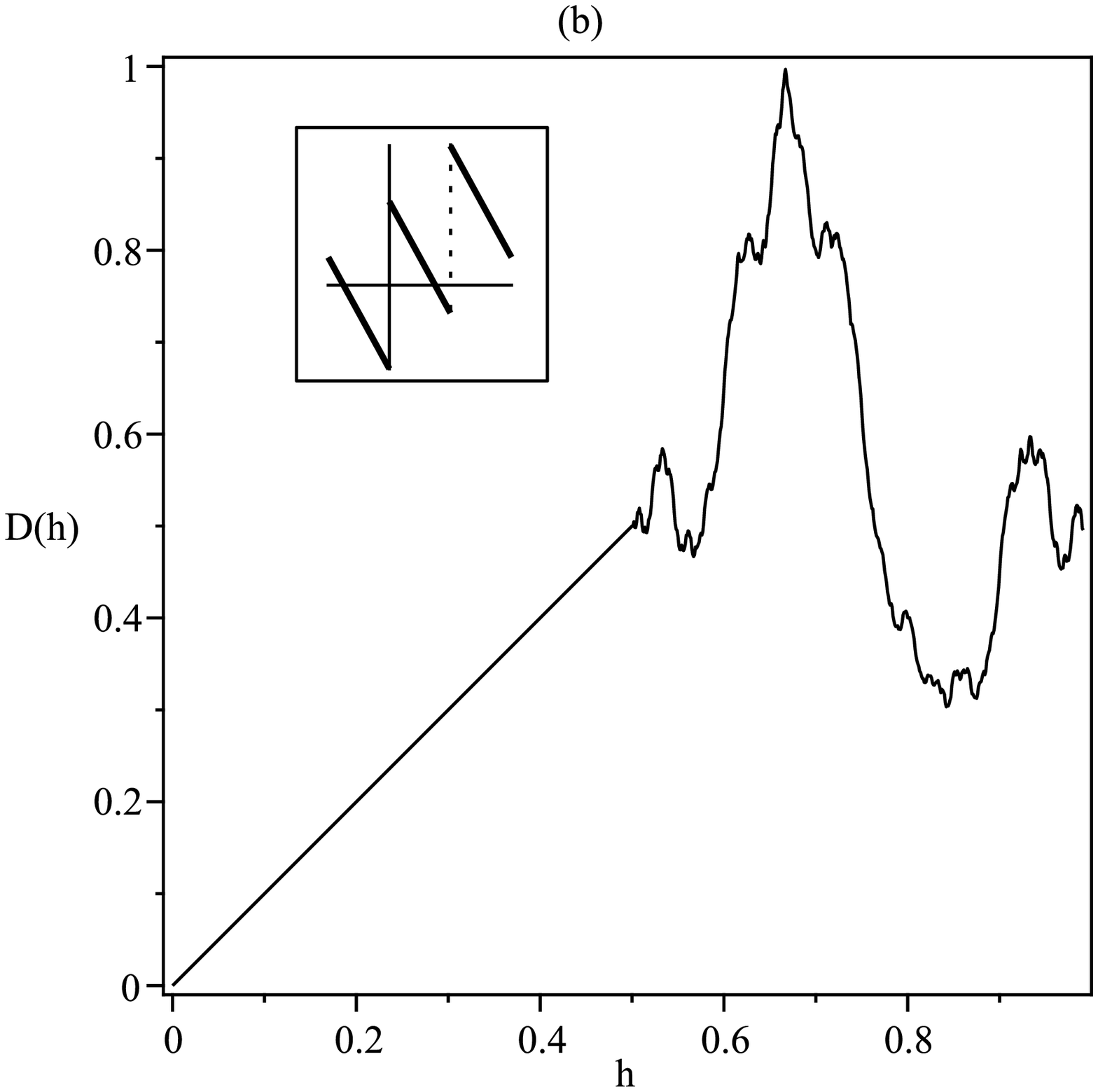}\\
  \includegraphics[width=7cm]{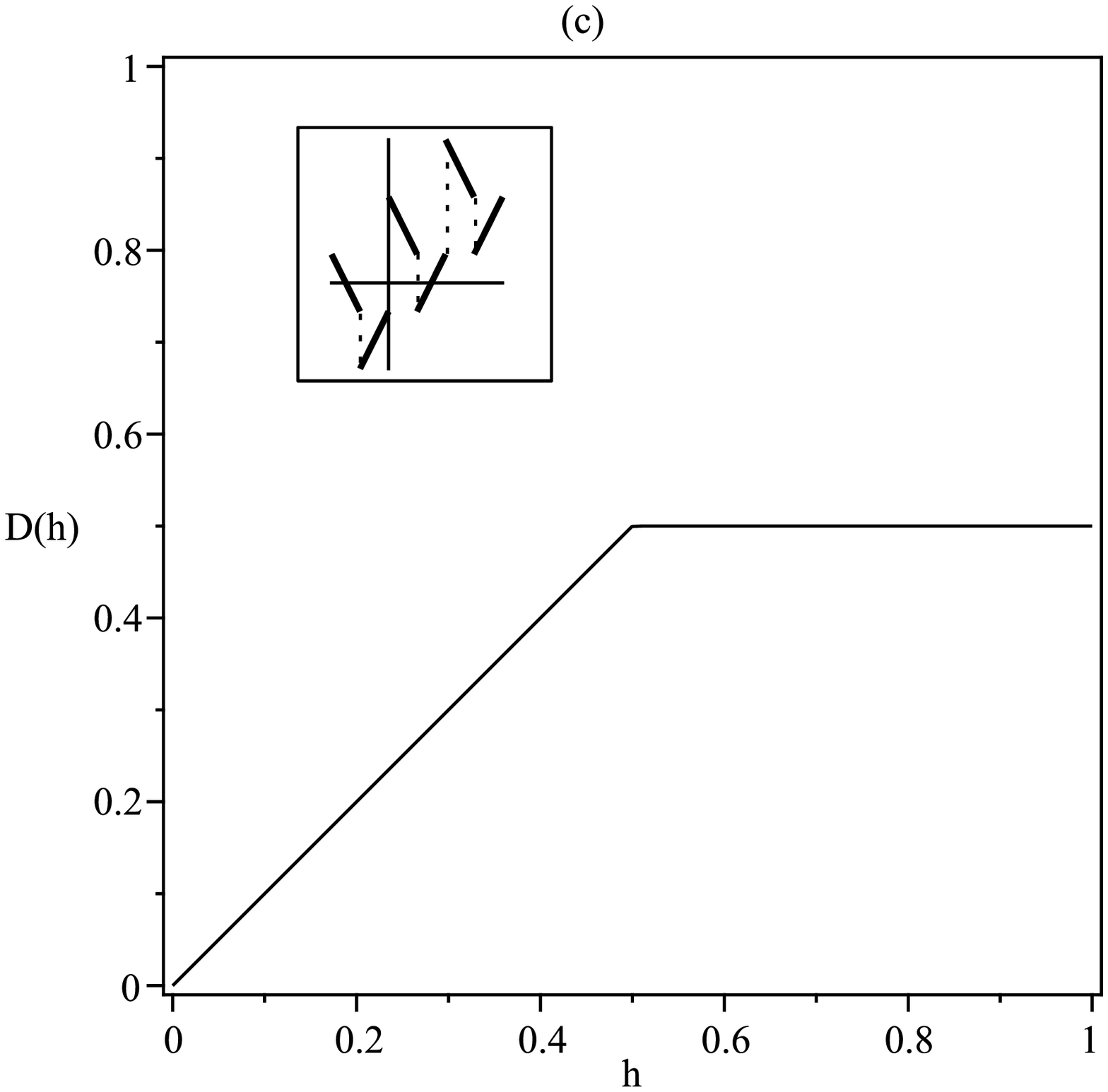} \includegraphics[width=7cm]{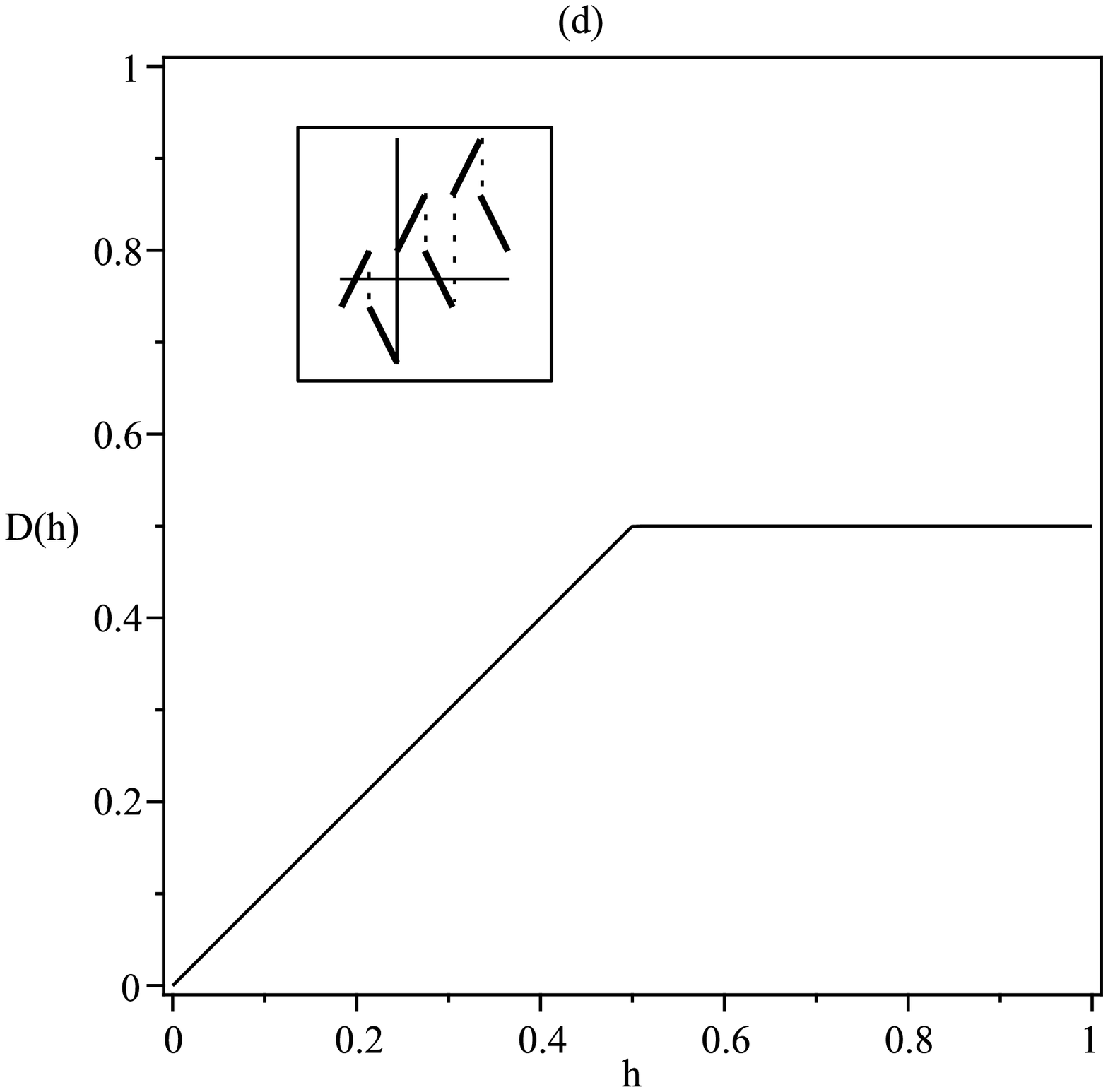}
\end{center}
\caption{\footnotesize{\emph{The diffusion coefficients}. In this figure, the parameter dependent diffusion coefficients are illustrated. In \textbf{(a)} the lifted Bernoulli shift. In \textbf{(b)} the lifted negative Bernoulli shift. In \textbf{(c)} the lifted V map. In \textbf{(d)} the lifted tent map.}}
\label{Fig:diff_coeffs}
\end{figure}

In this section the parameter dependent diffusion coefficients for the four maps will be shown, and their structure explained.

\subsection{The lifted Bernoulli shift map}
\label{subsec:liftedBern}

Figure \ref{Fig:diff_coeffs}.(a) gives the diffusion coefficient for the lifted Bernoulli shift map. The two striking features are the fractal region when $h$ is between zero and a half, and the linear plateau when $h$ is between a half and one. These regions will be explained in turn.

\subsubsection{The fractal region}
\label{sec:fractal_region}

Firstly, the term \quotemarks{fractal} has no strict mathematical
definition so we use the term loosely. In particular we use it to
refer to the fact that the diffusion coefficient exhibits non-trivial
fine scale structure, and regions of scaling and self similarity; for
a discussion of this see \cite{Klages-03}.

The topological instability of the map under parameter variation is reflected in the fractal structure of the diffusion coefficient. So in order to understand the fractality, we need to understand the topological instability. To this end, we take equation (\ref{Eq:M_h_box}) modulo $1$, and analyse the behaviour of the Markov partitions of the interval map $\tilde{M}_h(x):[0,1]\rightarrow [0,1]$

\begin{equation}
\tilde{M}_h (x)=
\left\{
\begin{array}{rl}
2x + \hat{h}     &\ \ 0 \leq x < \frac{1-\hat{h}}{2}\\
2x + \hat{h} -1  &\ \ \frac{1-\hat{h}}{2} \leq x < \frac{1}{2}\\
2x - \hat{h}     &\ \ \frac{1}{2}\leq x < \frac{1+\hat{h}}{2}\\
2x - 1 - \hat{h} &\ \ \frac{1+\hat{h}}{2} \leq x < 1 \end{array}\right. \ \ .
\label{Eq:M_h_mod1}
\end{equation}

The structure of the Markov partitions of (\ref{Eq:M_h_mod1}) varies wildly under parameter variation. The method we employ to understand the Markov partitions involves iterating the critical point $x=\frac{1}{2}$, see \cite{Klages-95,Klages-99,Klages-96}. The set of iterates of this point, along with the set of points symmetric about $x=\frac{1}{2}$, will form a set of Markov partition points for the map. Hence we call the orbit of $x=\frac{1}{2}$ a \quotemarks{generating orbit}. Furthermore, if the generating orbit is finite for a particular value of $h$, we obtain a finite Markov partition. We can then use the finite Markov partition to tell us about the diffusive properties of the map and hence the structure of the diffusion coefficient. For this purpose the following proposition is crucial.

\begin{prop}

The set of values of the parameter $h$ which give a finite Markov partition are dense in the parameter space.

\end{prop}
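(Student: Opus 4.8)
The plan is to reduce the statement to a density result about the forward orbit of the critical point $x=\tfrac12$ under the interval map $\tilde M_h$ of \eqref{Eq:M_h_mod1}, and then to show that this orbit is a finite set whenever $h$ is rational. As recalled in the text preceding the proposition, a finite generating orbit yields a finite Markov partition, so it suffices to prove that the set of $h$ for which the orbit $\{\tfrac12,\tilde M_h(\tfrac12),\tilde M_h^{\,2}(\tfrac12),\dots\}$ is finite is dense in the parameter space.

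First I would record the arithmetic structure of the orbit. Each branch of $\tilde M_h$ has the form $x\mapsto 2x+c$ with $c\in\{\hat h,\ \hat h-1,\ -\hat h,\ -1-\hat h\}$, so a single iteration sends a point $a+b\hat h$ to $2a+m+(2b+\varepsilon)\hat h$, where $\varepsilon\in\{-1,+1\}$ and $m\in\{0,-1\}$ are determined by the branch. Writing $x_n=\tilde M_h^{\,n}(\tfrac12)=a_n+b_n\hat h$, the seed gives $a_0=\tfrac12$, $b_0=0$, and the recursion
\begin{equation*}
a_{n+1}=2a_n+m_n,\qquad b_{n+1}=2b_n+\varepsilon_n
\end{equation*}
shows $a_1=1+m_0\in\zz$, so that $a_n,b_n\in\zz$ for every $n\ge1$. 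Thus, apart from the seed, every orbit point lies in the additive subgroup $\{a+b\hat h:a,b\in\zz\}$, and of course $0\le x_n<1$ because $\tilde M_h$ maps $[0,1)$ into itself.

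The heart of the argument is the observation that for rational $\hat h$ this subgroup meets $[0,1)$ in only finitely many points. If $\hat h=p/q$ in lowest terms, then $x_n=(a_nq+b_np)/q$ is a rational of denominator dividing $q$ lying in $[0,1)$, hence $x_n\in\{0,\tfrac1q,\dots,\tfrac{q-1}{q}\}$ for all $n\ge1$, a set of at most $q$ elements. By the pigeonhole principle two among $x_1,\dots,x_{q+1}$ coincide, so the orbit is eventually periodic and $\{x_n:n\ge0\}$ is finite. Since $\hat h=h\bmod 1$ is rational whenever $h$ is, every rational $h$ produces a finite generating orbit, hence a finite Markov partition; as the rationals are dense in any parameter interval, the proposition follows.

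The routine parts are the branch bookkeeping and the pigeonhole estimate. The step that needs genuine care is the treatment of orbit points that land exactly on the partition boundaries $\tfrac{1\pm\hat h}{2}$ or on $\tfrac12$ itself: there the half-open conventions in \eqref{Eq:M_h_mod1} must be applied consistently so that the affine form $a_n+b_n\hat h$ is preserved and the orbit remains single valued. One should also check that the reflected points symmetric about $\tfrac12$, which are adjoined to the iterates to build the Markov partition, inherit the same finiteness, and verify that the identical slope-$\pm2$, $\hat h$-affine structure lets the argument run verbatim for $W_h$, $V_h$ and $\Lambda_h$, so that the density conclusion holds for the whole family.
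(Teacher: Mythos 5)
Your proposal is correct and follows essentially the same route as the paper: show that a rational parameter $h=a/b$ confines the orbit of $x=\tfrac12$ to the finite grid $\{0,\tfrac1b,\dots,\tfrac{b-1}{b}\}$, conclude (pre-)periodicity and hence a finite Markov partition, and invoke the density of the rationals. Your explicit affine bookkeeping $x_n=a_n+b_n\hat h$ with $a_n,b_n\in\zz$ is just a cleaner way of carrying out the four-branch induction that the paper leaves as ``not hard to show''; otherwise the two arguments coincide.
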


\begin{proof}

We show that when $h$ is rational, the generating orbit is finite. This is achieved by showing that the denominator of $h$ fixes the number of possible iterates of the generating orbit. Let $h=\frac{a}{b}$ where $a,b\in \nz$ and $a\leq b$

\begin{eqnarray}\nonumber
\tilde{M}_{\frac{a}{b}} (0.5) &=& 1 - h\\
                              &=&\frac{b-a}{b}.
\end{eqnarray}

\noindent Clearly, $ b-a \in \{0,1,2,...,b-1\}$. If $\tilde{M}_{\frac{a}{b}} (\frac{b-a}{b})$ is then evaluated, there are four possibilities due to the four branches of (\ref{Eq:M_h_mod1}). It is not hard to show that for all four possibilities

\begin{equation}
                \tilde{M}_{\frac{a}{b}}\left(\frac{b-a}{b}\right)= \frac{c}{b}, \ \ c\in \{0,1,2,...,b-1\}.
\label{Eq:M_h_induction}
\end{equation}

\noindent So for $n=1$ and $n=2$

\begin{equation}
               \tilde{M}_{\frac{a}{b}}^n \left(0.5 \right)=\frac{c}{b},
\end{equation}
\noindent with $c \in \{0,1,2,...,b-1\}.$ Now we assume that

\begin{equation}
                \tilde{M}_{\frac{a}{b}}^m \left(0.5 \right)=\frac{c}{b}
\end{equation}

\noindent for all $m\leq n$, and the result follows by induction that

\begin{equation}
                \tilde{M}_{\frac{a}{b}}^n(0.5) = \frac{c}{b}, \ \ c\in \{0,1,2,...,b-1\} \ \ \forall n\in \nz.
\label{Eq:Mar_par_res}
\end{equation}

The result in (\ref{Eq:Mar_par_res}) puts a limit on the size of the subset of values that the orbit of $x=0.5$ can hit at a given rational value of $h$. This size being equal to $|\{0,\frac{1}{b},...,\frac{b-1}{b}\}|=b$. Hence the orbit must be periodic or pre-periodic, and the Markov partition of the map must have a finite number of partition points when $h$ is rational.

\end{proof}

The second important result is that the finite Markov partitions correspond to the local minima and maxima of the diffusion coefficient \cite{Klages-95, Klages-96, Klages-99}. If we extend our view back to the full maps, we see that if the generating orbit is periodic, i.e.

\begin{equation}
                \tilde{M}_h^n(0.5)=0.5, \ \ n\in \nz,
\label{Eq:M0.5=0.5}
\end{equation}

\noindent then this corresponds to a relatively high rate of diffusion for the parameter value, which is reflected in the diffusion coefficient as a local maximum. In contrast, if the generating orbit is pre-periodic, this corresponds to a relatively low rate of diffusion for the parameter value, which is reflected in the diffusion coefficient as a local minimum. So given that we have a dense set of local maxima and minima, we observe a fractal diffusion coefficient. Furthermore, equation (\ref{Eq:M0.5=0.5}) furnishes us with the means of pinpointing the local maxima of the diffusion coefficient. In \cite{Klages-96} this technique gave an approximation of where the local extrema were, but this model allows to to find them precisely. Each $n$ gives us a set of simple linear equations for the variable $h$, the solutions of which are the local maxima in the diffusion coefficient. In addition, the smaller values of $n$ give the most striking local maxima. One can apply a similar technique to locate the local minima of the graph, see figure \ref{fig:D(h)_Ex}. However, it's not a case of finding solutions to one simple equation like (\ref{Eq:M0.5=0.5}). Rather, there are many ways to define pre-periodic orbits as opposed to defining periodic orbits. In addition, the local minima do not adhere to such a strict ordering that the local maxima do. This is due to the fact that there are two components to a pre-periodic orbit, namely the transient length and the periodic orbit length.
In summary, for $h$ between zero and one half, there exists a dense set of points which are either local maxima or local minima. Hence a fractal structure is observed.

\begin{figure}[h!]
\begin{center}
  \includegraphics[width=12cm]{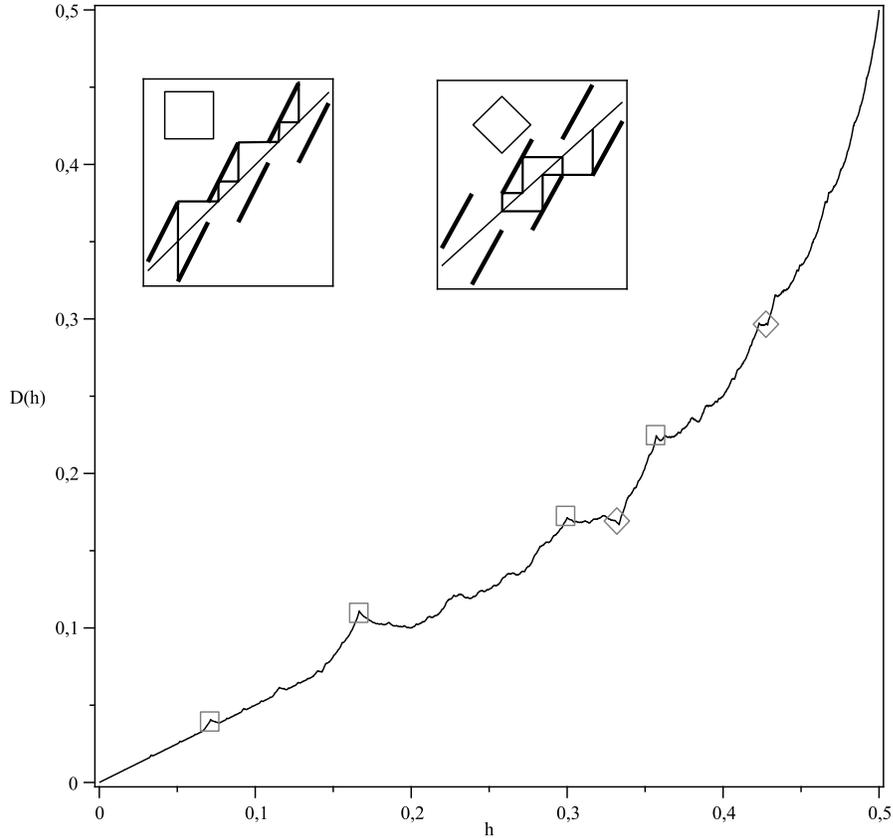}
\end{center}
\caption{\footnotesize{\emph{Pinpointing the local extrema}. In this figure, some of the local extrema have been highlighted. At the points highlighted by squares we see the orbit of $0.5$ is iterated to infinity resulting in a local maximum. At the points highlighted by diamonds we see the orbit of $0.5$ is in a closed loop resulting in a local minimum.}}
\label{fig:D(h)_Ex}
\end{figure}

\subsubsection{The linear region}
\label{sec:linear_region}

The second feature of the diffusion coefficient is the linear region where $0.5 \leq h \leq1$. Not only is it striking because it very abruptly changes from fractal to linear, it is also counter intuitive if we apply a simple random walk approximation to the map. That is, we can obtain a first order approximation of the diffusion coefficient by looking at the measure of the escape region of the map (the area where a point can move from one unit interval to the next) \cite{Klages-96,Klages-97}. This region clearly increases linearly with the parameter, so based on a first order approximation one would expect to see a general increase in the diffusion coefficient as the parameter increases. However, what is observed defies this. We explain this feature by noting the non-ergodicity of the map.

\begin{figure}[h]
\includegraphics[width=7cm]{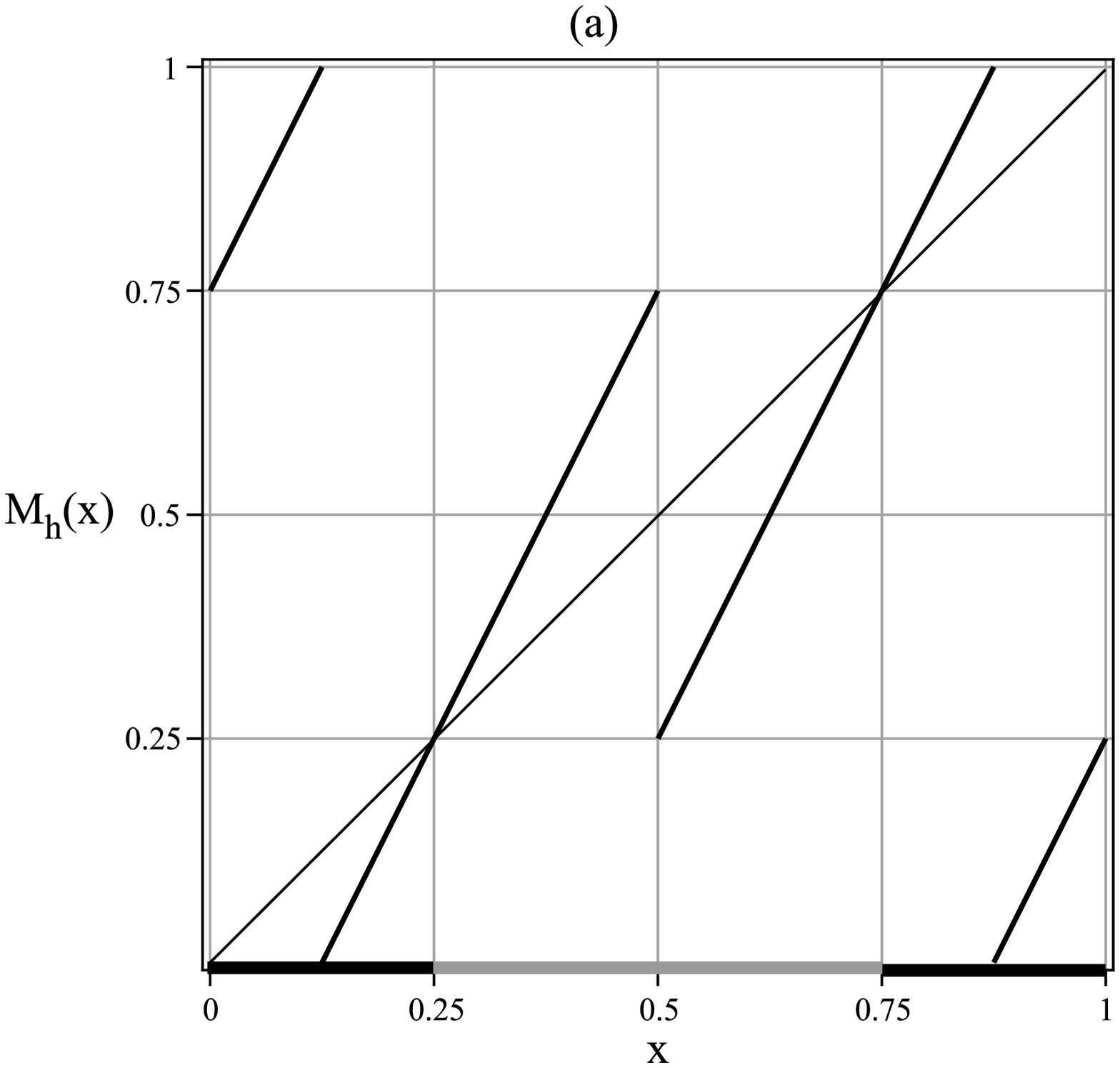}  \includegraphics[width=7cm]{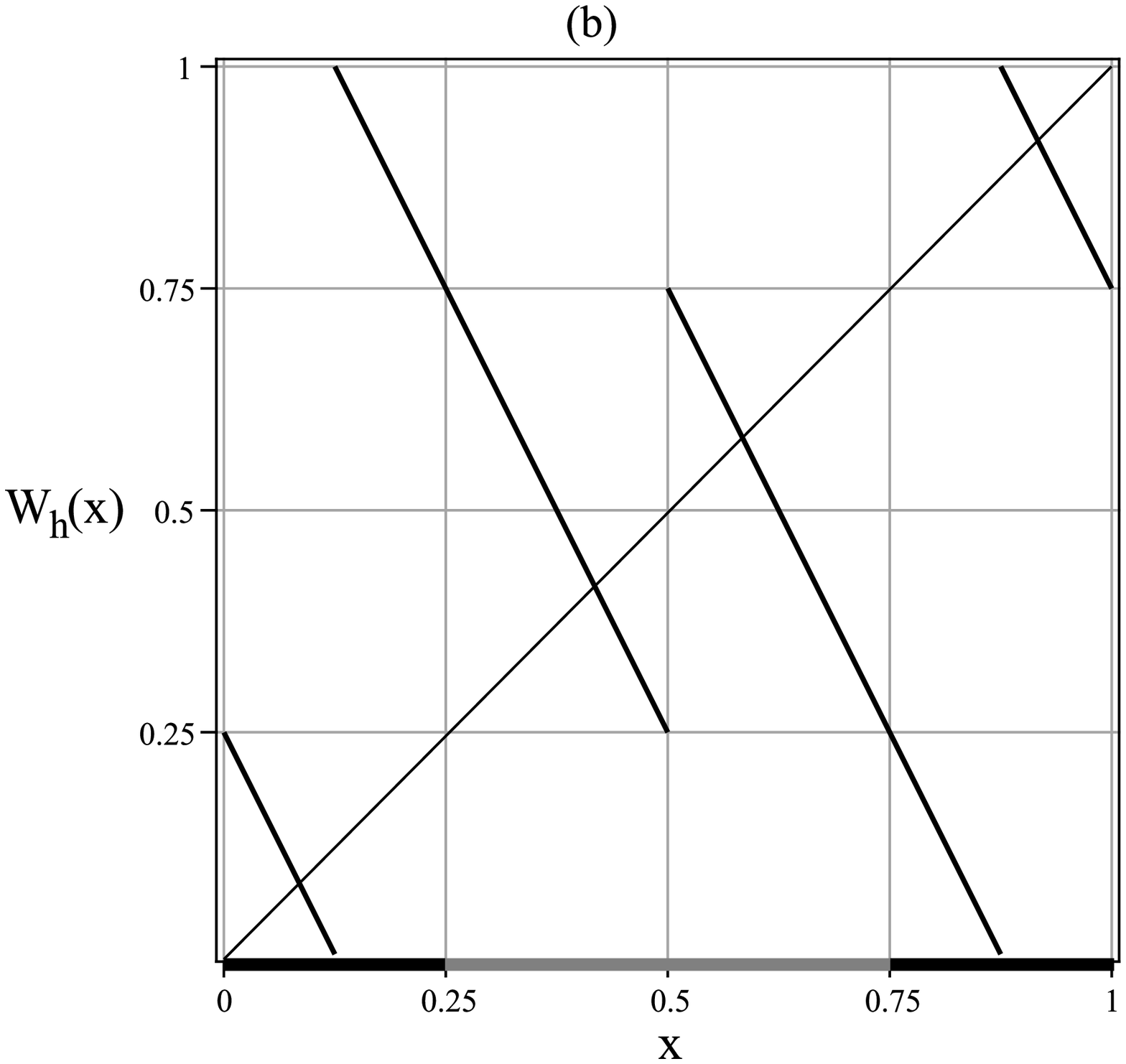}
\caption{\footnotesize{\emph{Nonergodicity}. In this figure, the nonergodicity of the the lifted Bernoulli shift map at $h=0.75$ is shown in (a), and in the lifted negative Bernoulli shift at $h=0.25$ in (b). For simplicity the dynamics have been reduced to the maps modulo $1$. One can see that the black areas get mapped into themselves as do the grey areas. This splits up the phase space breaking ergodicity.}}
\label{Fig:non_erg}
\end{figure}

When the parameter $h$ reaches one half, a fixed point is born in the modulo $1$ map. As the parameter increases further, the fixed point bifurcates and the two resulting fixed points split the phase space up into two invariant sets, breaking the ergodicity of the map, see figure \ref{Fig:non_erg}. Consequently, the invariant density $\rho^*(x)$, can be interpreted as the sum of two invariant densities $\rho^*_1(x)$ and $\rho^*_2(x)$ and the diffusion coefficient can be evaluated as

\begin{eqnarray}\nonumber
D_M(h)                              &=&       \lim_{n\to\infty}\frac{1}{2n}\int_0^1 \rho^*(x)(x_n-x_0)^{2}dx\\
                                  &=&       \lim_{n\to\infty}\frac{1}{2n}\left(\int_0^1 \rho_1^*(x)(x_n-x_0)^{2}dx + \int_0^1 \rho_2^*(x)(x_n-x_0)^{2} dx\right).
\label{Eq:D(h)_non_erg}
\end{eqnarray}

\noindent We can then use the Taylor-Green-Kubo formula and derive two separate generalised Takagi functions, in order to evaluate the diffusion coefficient as

\begin{eqnarray}\nonumber
                         D_M(h)      &=&     (1-h) + (h-\frac{1}{2})\\
                                     &=&      \frac{1}{2}.
\label{Eq:D(h)_non_erg_2}
\end{eqnarray}


\subsection{The negative Bernoulli shift}
\label{sec:neg_bern_shift}

Figure \ref{Fig:diff_coeffs}(b) shows the diffusion coefficient for the lifted negative Bernoulli shift map $W_h(x)$. Firstly we note how radically different the structure of the diffusion coefficient is from the lifted Bernoulli shift map. In addition, as $h \rightarrow 0$ the diffusion coefficient does not go to $\frac{h}{2}$ by reproducing the random walk solution as in the lifted Bernoulli shift map. However, as $h \rightarrow \infty$  the diffusion coefficient goes to $\frac{h^2}{2}$ as in the lifted Bernoulli shift. We also observe both a linear and a fractal region. For $0\leq h \leq\frac{1}{2}$ the diffusion coefficient is a simple linear function equal to $h$. Again, the explanation for this is that the map is non-ergodic in this parameter range, see figure \ref{Fig:non_erg}. The phase space is split up into two invariant regions, one of which does not contribute to diffusion as it is a trapping region, the other of which grows linearly with $h$ so we observe a linear, increasing diffusion coefficient. When $\frac{1}{2}\leq h \leq 1$, the map becomes topologically unstable under parameter variation. Similar to the lifted Bernoulli shift map, this instability is reflected in the behaviour of the Markov partitions of (\ref{Eq:V_h_box}) taken modulo $1$. We have that the finite Markov partitions are dense in the parameter space and that these finite Markov partitions correspond to the local maxima and minima of the diffusion coefficient. Hence we observe a fractal diffusion coefficient.


\subsection{The lifted V map}
\label{subsec:vmap}

Mercifully, we need not turn to infinite sums like equation (\ref{Eq:T_V_sum}) when we evaluate the diffusion coefficient for the lifted V map. The diffusion coefficient for the V-map is given by

\begin{equation}
                D_V(h)=\frac{h}{2}+\frac{1}{2}\left(T_V(h)+T_V(1-h)\right),\\ \\ 0\leq h \leq 1.
\label{Eq:D(h)_tent_v}
\end{equation}

\noindent Instead of evaluating the relevant Takagi function, equation (\ref{Eq:T(h)_V}), numerically we can use the helpful property that for $h$ less than one half

\begin{eqnarray}\nonumber
                T_V(h)&=& -\frac{1}{2}T_V(1-h)+\frac{h}{2}+\frac{1}{2}T_V(h)\\
                &=&-T_V(1-h)+h.
\label{Eq:eval_d_tent}
\end{eqnarray}

\noindent Using equation (\ref{Eq:eval_d_tent}) in equation (\ref{Eq:D(h)_tent_v}), the diffusion coefficient can be evaluated to

\begin{eqnarray}\nonumber
                         D_V(h)&=&\frac{h}{2}+\frac{1}{2}\left(-T_V(1-h)+h+T_V(1-h)\right)\\
                             &=& h.
\label{Eq:eval_d_tent_2}
\end{eqnarray}

\noindent Furthermore, for $h$ greater than one half

\begin{eqnarray}\nonumber
                T_V(h)&=& -h+\frac{1}{2}T_V(h)+\frac{h}{2}+\frac{1}{2}-\frac{1}{2}T_V(1-h)\\
                &=&-T_V(1-h)-h+1.
\label{Eq:eval_d_tent_3}
\end{eqnarray}

\noindent Using equation (\ref{Eq:eval_d_tent_3}) in equation (\ref{Eq:D(h)_tent_v}), the diffusion coefficient can again be evaluated

\begin{eqnarray}\nonumber
                D_V(h)&=&\frac{h}{2}+\frac{1}{2}\left(-T_V(1-h)-h+1+T_V(1-h)\right)\\
                    &=& \frac{1}{2}.
\label{119}
\end{eqnarray}

\noindent See figure \ref{Fig:diff_coeffs}.(c) for an illustration.

\subsection{The lifted tent map}
\label{subsec:tent}

For the lifted tent map $\Lambda_h(x)$, we can not perform the same trick with the Takagi functions, equation (\ref{Eq:T(h)_tent}), that we did with the lifted V map in subsection \ref{subsec:vmap}. However we still do not need to resort to numerical computations to see what the diffusion coefficient looks like as we can note that

\begin{equation}
                \Lambda_h(x)=-V_h(-x).
\label{Eq:conjugacy}
\end{equation}

\noindent Equation (\ref{Eq:conjugacy}) is important because it serves as a topological conjugacy of the form $f(x)=-x$. By using the Taylor-Green-Kubo formula equation (\ref{Eq:TGK2}) it was shown in \cite{Korabel-2004} that the diffusion coefficient is preserved under topological conjugacy, hence we observe an identical diffusion coefficient in the two maps which can be seen in figure \ref{Fig:diff_coeffs}. An alternative way is to start from Einstein's formula equation (\ref{Eq:Einstein_diff}), as is demonstrated by proving the following proposition:

\begin{prop}

The diffusion coefficients for the lifted tent map is identical to that of the lifted V map.

\end{prop}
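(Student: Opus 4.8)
The plan is to start directly from Einstein's formula (\ref{Eq:Einstein_diff}) and exploit the conjugacy (\ref{Eq:conjugacy}) at the level of individual orbits, which is the alternative to invoking the invariance result of \cite{Korabel-2004}. Writing $f(x)=-x$, equation (\ref{Eq:conjugacy}) says $\Lambda_h=f\circ V_h\circ f^{-1}$, and since $f\circ f=\mathrm{id}$ this is a genuine involutive conjugacy. The first thing I would record is that $f$ is compatible with the degree-one lift (\ref{Eq:lift}): because $f(x+n)=f(x)-n$ and both $V_h$ and $\Lambda_h$ commute with integer translations, the identity $\Lambda_h(x)=-V_h(-x)$ that holds on the unit cell propagates to all of $\rz$.

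Next I would push the conjugacy up to iterates by a one-line induction: assuming $\Lambda_h^n(-y)=-V_h^n(y)$, equation (\ref{Eq:conjugacy}) gives $\Lambda_h^{n+1}(-y)=\Lambda_h(-V_h^n(y))=-V_h(V_h^n(y))=-V_h^{n+1}(y)$. Consequently, if $x_0=-y_0$ then the full (unwrapped) orbits on $\rz$ satisfy $x_k=-y_k$, so the net displacements agree up to sign, $x_n-x_0=-(y_n-y_0)$, and hence $(x_n-x_0)^2=(y_n-y_0)^2$. Thus the squared displacement, which is the only orbit datum entering (\ref{Eq:Einstein_diff}), is invariant under the conjugacy.

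It then remains to deal with the average (\ref{Eq:average}). Both maps are piecewise linear with slope $\pm2$ and, taken modulo one, are exactly two-to-one onto the circle, so Lebesgue measure is preserved and $\rho^*\equiv1$ for each, exactly as in the rest of the model. I would therefore write $D_\Lambda(h)=\lim_n\frac{1}{2n}\int_0^1(\Lambda_h^n(x)-x)^2\,dx$ and substitute $x=-y$. By the orbit identity the integrand becomes $(V_h^n(y)-y)^2$, while the domain $[0,1]$ is carried to $[-1,0]$; a shift $y\mapsto y+1$, legitimate because $V_h^n(y+1)-(y+1)=V_h^n(y)-y$ by the lift (\ref{Eq:lift}), returns the domain to $[0,1]$ and produces precisely $D_V(h)$.

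The step needing the most care is the bookkeeping in this last change of variables: one must verify that $f(x)=-x$ sends the fundamental domain $[0,1]$ onto a translate of itself and that the net displacement $\Lambda_h^n(x)-x$ is genuinely periodic modulo one, so that the integral over $[-1,0]$ may be identified with the one over $[0,1]$. Once the compatibility of $f$ with the lift (\ref{Eq:lift}) and the uniformity of $\rho^*$ are established, the equality of the two diffusion coefficients follows immediately.
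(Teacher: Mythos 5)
Your proposal is correct and follows essentially the same route as the paper: both start from Einstein's formula (\ref{Eq:Einstein_diff}) and use the conjugacy $\Lambda_h(x)=-V_h(-x)$ of equation (\ref{Eq:conjugacy}), pushed up to iterates, to identify the squared displacements of the two maps. The only difference is bookkeeping of the average: where the paper places the p.d.f.\ on the symmetric interval $[-1,1]$, you keep it on $[0,1]$ and absorb the sign flip by an explicit change of variables plus a unit shift justified by the lift (\ref{Eq:lift}) --- a slightly more explicit treatment of the same step.
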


\begin{proof}

We note that the diffusion coefficient for the lifted V map is given by

\begin{equation}
               D_V(h)=\lim_{n\rightarrow\infty}\frac{\left\langle \left(    V_h^n(x_0)-x_0     \right)^2   \right\rangle}{2n},
\label{Eq:D_v}
\end{equation}

\noindent and that the diffusion coefficient for the tent map ($D_{\Lambda}(h)$) can be given by

\begin{equation}
               D_\Lambda(h)=\lim_{n\rightarrow\infty}\frac{\left\langle \left(\Lambda^n_h(-x_0)-(-x_0)     \right)^2   \right\rangle}{2n},
\label{Eq:D_Lambda}
\end{equation}

\noindent where the p.d.f is in the interval, $[-1,1]$ to make it symmetric about $x=0$. By substituting (\ref{Eq:conjugacy}) into (\ref{Eq:D_Lambda})

\begin{eqnarray}\nonumber
               D_\Lambda(h) &=& \lim_{n\rightarrow\infty}\frac{\left\langle \left(\Lambda^n_h(-x_0)-(-x_0)     \right)^2   \right\rangle}{2n}\\ \nonumber
                            &=& \lim_{n\rightarrow\infty}\frac{\left\langle \left(-V^n_h(x_0)-(-x_0)     \right)^2   \right\rangle}{2n}\\ \nonumber
                            &=& \lim_{n\rightarrow\infty}\frac{\left\langle \left(V^n_h(x_0)-x_0     \right)^2   \right\rangle}{2n}\\
                            &=& D_V(h).
\label{Eq:D_l=D_v}
\end{eqnarray}

\noindent we arrive at the desired relationship that $D_\Lambda(h)=D_V(h)$.

\end{proof}

From what we have seen in the lifted Bernoulli shift map and the lifted negative Bernoulli shift map, we would expect to find non-ergodicity in the lifted tent and lifted V maps. Furthermore we would expect to find it across the entire parameter range given the linear diffusion coefficients. However, there is no obvious non-ergodicity, and although a proof that the maps are ergodic across the entire parameter range remains elusive, we can check for ergodicity at individual values of the parameter by checking the reducibility of the transition matrices \cite{Petersen-1983}. So we can confirm ergodicity for some values of $h$. Furthermore, the behaviour under parameter variation of the Markov partitions of the maps (\ref{Eq:V_h_box}) and (\ref{Eq:Lambda_h_box}) taken modulo $1$ indicate the presence of topological instability in the parameter space. So, although all the ingredients for a fractal diffusion coefficient are present, we observe a linear one.

In order to understand the linearity of the diffusion coefficient of the lifted V map (and hence the lifted tent map also), we first note the similarity of the linear regions in the diffusion coefficients of the two lifted Bernoulli shift maps, see figure \ref{Fig:diff_coeffs}. This presents the question, why do these maps have the same diffusion coefficients when they have such different microscopic dynamics?

We will explain the linearity of the diffusion coefficients of the lifted tent and lifted V map by showing why they have the same diffusion coefficients as the two Bernoulli shift maps in the relevant parameter ranges. We will take these ranges in turn starting with $0.5 \leq h \leq 1$. The diffusion coefficient for the lifted Bernoulli shift map is given by equation (\ref{Eq:D(h)_der_3}) which for $h \in [0.5,1]$ simplifies to

\begin{equation}
                 D_M(h)= \frac{h}{2}+T_M(h)
\label{Eq:D(h)_0.5_1}
\end{equation}

\noindent which can be rewritten as

\begin{equation}
                D_M(h)= \frac{h}{2} + \lim_{n\rightarrow \infty}\left( \left(\sum_{k=0}^{n-1} \frac{t_M\left(h \right)}{2^{k+1}}\right) + \frac{t_M(h)}{2^{n}} \right)
\label{Eq:D(h)_0.5_1_final}
\end{equation}

\noindent See Appendix (\ref{App:simplify}) for a full derivation of equation (\ref{Eq:D(h)_0.5_1_final}) which tells us how the diffusion coefficient converges as $n \rightarrow \infty$. Now, keeping equation (\ref{Eq:D(h)_0.5_1_final}) in mind, we turn our attention to the lifted V map. The diffusion coefficient for the lifted V map with $h \in [0,1]$ is given by

\begin{equation}
                D_V(h)= \frac{h}{2} + \lim_{n\rightarrow\infty} \left( \frac{1}{2}T^{n-1}_V(h) +\frac{1}{2}T^{n-1}_V(1-h)\right).
\label{Eq:D(h)_V_0.5_1}
\end{equation}

\noindent From equation (\ref{Eq:T(h)_V}) we can derive the useful recursion relation

\begin{equation}
               T_V^n(h)=t_V(h)+\frac{1}{2}T_V^{n-1}(h)-\frac{1}{2}T_V^{n-1}(1-h)
\label{Eq:T(h)_recurs_V}
\end{equation}

\noindent which if we continue to apply leads to

\begin{equation}
                T_V^n(h) = \sum_{k=0}^n \frac{1}{2^k} t_V(h) - \sum_{k=1}^n \frac{1}{2^k}T_V^{n-k}(1-h).
\label{Eq:TV_rec_app}
\end{equation}

\noindent Using equation (\ref{Eq:TV_rec_app}) in equation (\ref{Eq:D(h)_V_0.5_1})

\begin{equation}
               D_V(h)=\frac{h}{2} + \lim_{n\rightarrow \infty} \left(\sum_{k=0}^{n-1} \frac{t_V(h)}{2^{k+1}}+\frac{1}{2}\left( T_V^{n-1}(1-h)- \sum_{k=1}^{n-1}\frac{1}{2^{k}}T_V^{n-1-k}(1-h)\right)\right).
\label{Eq:D(h)_0.5_1_V_final}
\end{equation}

\noindent Now, we note that

\begin{equation}
                 T_V^{n-1}(1-h) = \sum_{k=1}^{n-1}\frac{1}{2^k}T_V^{n-1-k}(1-h), \ \ n \rightarrow \infty.
\label{Eq:T_cancel}
\end{equation}

\noindent so these terms cancel each other out in the limit and we are left with

\begin{equation}
               D_V(h)=\frac{h}{2}+t_V(h).
\label{Eq:DV(h)_limit}
\end{equation}

\noindent If we let $n \rightarrow \infty $ in equation \ref{Eq:D(h)_0.5_1_final} then

\begin{equation}
               \frac{t_M(h)}{2^{n}} \rightarrow 0 \ \ n \rightarrow \infty
\label{Eq:t_M(h)_0}
\end{equation}

\noindent and hence we are left with the expression

\begin{equation}
               D_M(h)=\frac{h}{2}+t_M(h).
\label{Eq:DM(h)_limit}
\end{equation}

\noindent We can see from equation (\ref{Eq:t_M(x)}) and equation (\ref{Eq:t_V(x)}) that $t_M(h)=t_V(h)$ when $h \in [0.5,1]$ and hence the two diffusion coefficients are equal despite $M_h(x)$ being non-ergodic and $V_h(x)$ ergodic in this parameter range. However the two equations (\ref{Eq:D(h)_0.5_1_V_final}) and (\ref{Eq:D(h)_0.5_1_final}) tell us that $D_M(h)$ and $D_V(h)$ converge at different rates, see figure (\ref{Fig:time_dep_diff}) for an illustration of this phenomenon.We also note that the diffusion coefficient is only dependent on $t_M(h)$ or $t_V(h)$ and that these functions are only dependent on the branch of the map in $[0.5,1]$. We interpret this phenomenon physically as the diffusion undergoing a dominating branch process, i.e. the diffusion coefficient is only dependent on the contribution of one branch of the map in the limit as $n \rightarrow \infty$. Hence we see identical diffusion coefficients despite the different microscopic dynamics.

We find a similar situation for $h \in [0,0.5]$, where the lifted negative Bernoulli shift map and the lifted V map have the same diffusion coefficient. They have the branch in $[0,0.5]$ in common in this parameter range and it is this which creates the dominating branch process. Hence we also observe identical diffusion coefficients between these two maps in this parameter range.

\begin{figure}[h!]
\hspace{-0.3cm} \includegraphics[width=7.5cm]{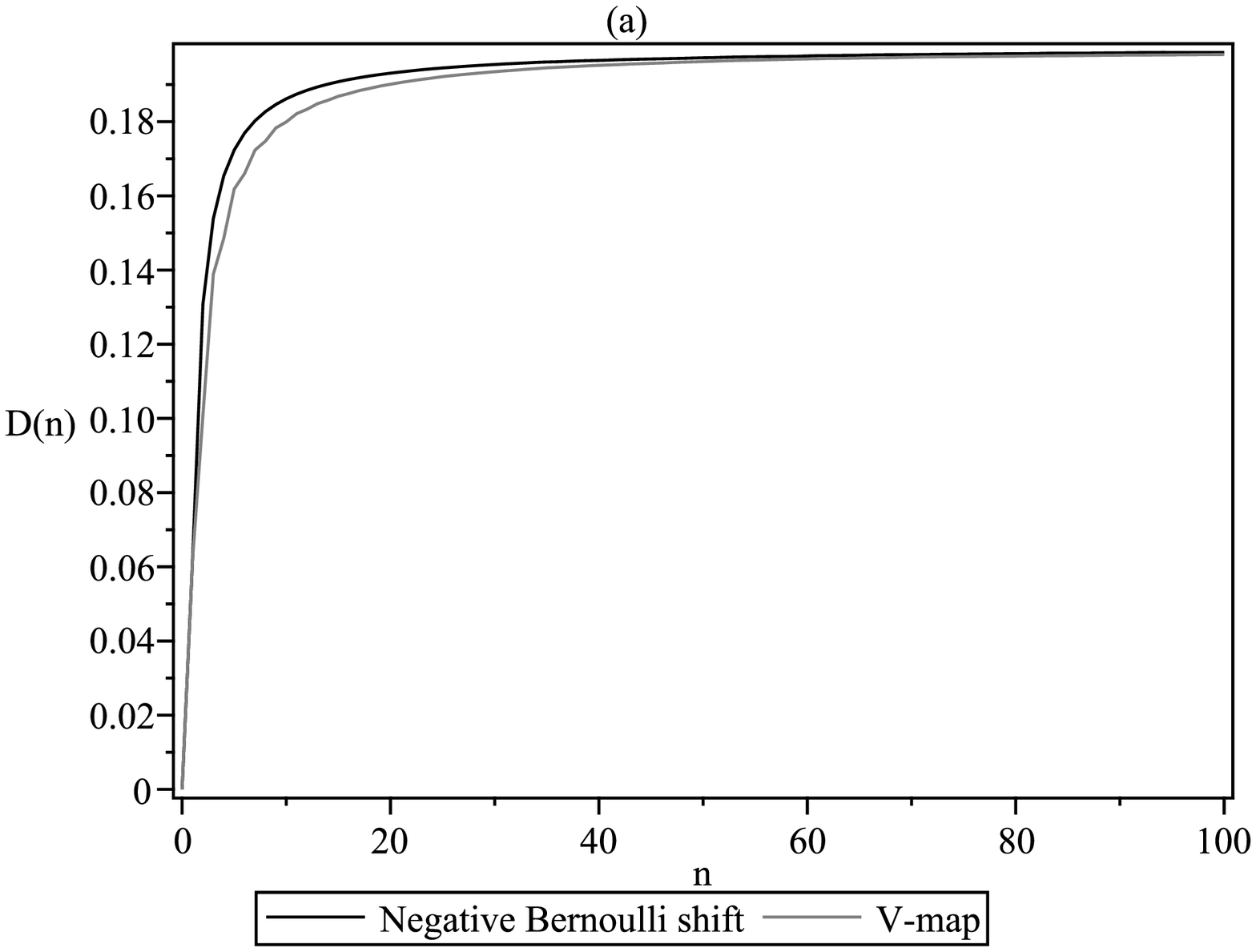}\nobreak \includegraphics[width=7.5cm]{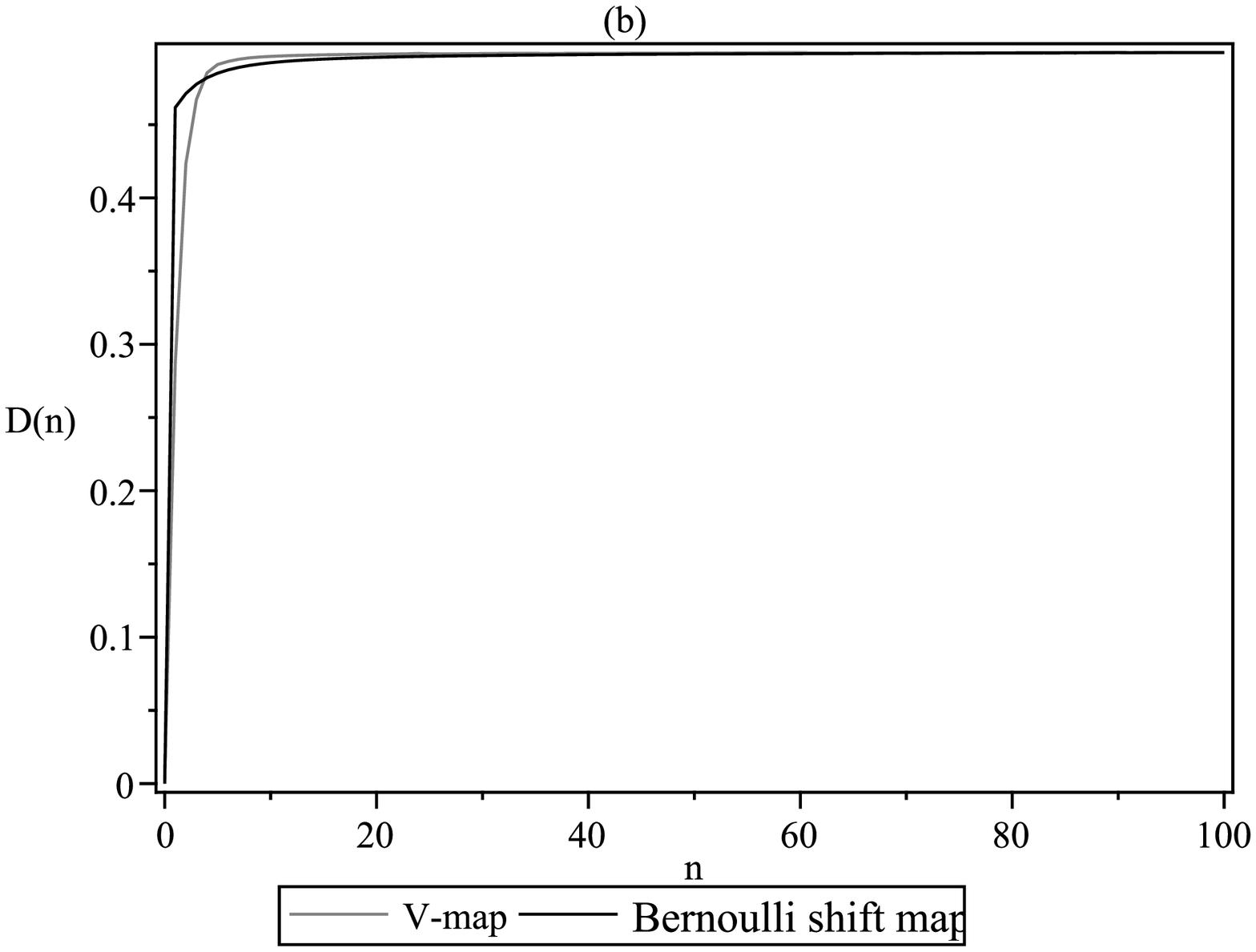}
\caption{\footnotesize{\emph{The time dependent diffusion coefficients}. In this figure we see how the diffusion coefficient converges for certain parameter values. In (a) $h=0.2$ for the lifted negative Bernoulli shift map and the lifted V map,  in (b) $h=0.7$ for the lifted Bernoulli shift map and the V map. We see that the diffusion coefficients tend to the same value but at different rates, indicating that the difference in the microscopic dynamics does not play a role in the limit. Rather we observe a dominating branch process where the common branch of the map determines the diffusion coefficient.}}
\label{Fig:time_dep_diff}
\end{figure}

\subsection{Stability of D(h) in the non-ergodic regions}
\label{subsec:stability}

In this subsection we look at the two lifted Bernoulli shift maps in the non-ergodic regions. We have already seen that changing the gradient of one branch of the map (resulting in the lifted V map or lifted tent map) has no effect on the diffusion coefficient in these regions, even though the microscopic dynamics are affected greatly. We will explore this phenomenon further.

For $h \in [0,0.5]$ the lifted negative Bernoulli shift map has a linear diffusion coefficient. By changing the gradient and chopping up the second branch of the map we obtain a map $\hat{W}_h(x):[0,1]\rightarrow \rz$

\begin{equation}
\hat{W}_h (x)=
\left\{
\begin{array}{rl}
-2x +1 + h & 0\leq x <\frac{1}{2}\\
2x-1       & \frac{1}{2}  \leq x \leq 1 - \frac{h}{2}\\
2x-2   & 1-\frac{h}{2}\leq x < 1\end{array}\right. .
\label{Eq:W_hat_box}
\end{equation}

\noindent The Takagi function for this map with $h \in [0,1]$ is

\begin{equation}
T_{\hat{W}} (x)=
\left\{
\begin{array}{ll}
-\frac{1}{2}T_{\hat{W}}(-2x+h)+x+\frac{1}{2}T_{\hat{W}}(h)                                     &\ \ 0 \leq x < \frac{h}{2}\\
-\frac{1}{2}T_{\hat{W}}(-2x+1+h)+ \frac{h}{2}+\frac{1}{2}T_{\hat{W}}(h)                        &\ \ \frac{h}{2} \leq x < \frac{1}{2}\\
\frac{1}{2}T_{\hat{W}}(2x-1)+\frac{h}{2}                                                       &\ \ \frac{1}{2}\leq x < 1-\frac{h}{2}\\
\frac{1}{2}T_{\hat{W}}(2x-1)+1-x                                                              &\ \ 1-\frac{h}{2} \leq x < 1 \end{array}\right.
\label{Eq:T(W)_Hat}
\end{equation}

\noindent and the diffusion coefficient can be evaluated as

\begin{equation}
                D_{\hat{W}}(h)=\frac{h}{2}+\frac{1}{2}\left(T_{\hat{W}}(h)+T_{\hat{W}}(1-h)\right),\\ \\ 0\leq h \leq 1.
\label{Eq:D(h)_W_hat}
\end{equation}

\noindent Again for $h \in [0,0.5]$ we have that

\begin{equation}
                T_{\hat{W}}(h) = -T_{\hat{W}}(1-h)+h.
\label{Eq:T_relation}
\end{equation}

\noindent which implies that for $h \in [0,0.5]$ the diffusion coefficient is equal to $h$.

We can play a similar game with the lifted Bernoulli shift map which has a linear diffusion coefficient for $h \in [0.5,1]$. We can change the gradient and chop up the first branch of the map to create $\hat{M}_h(x):[0,1]\rightarrow \rz$

\begin{equation}
\hat{M}_h (x)=
\left\{
\begin{array}{rl}
-2x +1   & 0\leq x <\frac{1-h}{2}\\
-2x +2   & \frac{1-h}{2} \leq x \leq \frac{1}{2}\\
2x-1-h   & \frac{1}{2}\leq x < 1\end{array}\right. .
\label{Eq:M_hat_box}
\end{equation}

\noindent The Takagi function for this map with $h \in [0,1]$ is

\begin{equation}
T_{\hat{M}} (x)=
\left\{
\begin{array}{ll}
-\frac{1}{2}T_{\hat{M}}(-2x+1)                                                 &\ \ 0 \leq x < \frac{1-h}{2}\\
-\frac{1}{2}T_{\hat{M}}(-2x+1)+x- \frac{1-h}{2}                                &\ \ \frac{1-h}{2} \leq x < \frac{1}{2}\\
\frac{1}{2}T_{\hat{M}}(2x-h)-x+\frac{1+h}{2} -\frac{1}{2}T_{\hat{M}}(1-h)      &\ \ \frac{1}{2}\leq x < \frac{1+h}{2}\\
\frac{1}{2}T_{\hat{M}}(2x-1-h)-\frac{1}{2}T_{\hat{M}}(1-h)                     &\ \ \frac{1+h}{2} \leq x < 1 \end{array}\right.
\label{Eq:T(h)_Hat}
\end{equation}

\noindent and the diffusion coefficient for this map is

\begin{equation}
                D_{\hat{M}}(h)=\frac{h}{2}+\frac{1}{2}\left(T_{\hat{M}}(h)+T_{\hat{M}}(1-h)\right),\\ \\ 0\leq h \leq 1.
\label{Eq:D(h)_M_hat}
\end{equation}

\noindent We see that for $h\in [0.5,1]$ we have

\begin{equation}
                T_{\hat{M}}(h)= -T_{\hat{M}}(1-h)-h+1
\label{Eq:T_relation_2}
\end{equation}

\noindent which implies that for $h \in [0.5,1]$ the diffusion coefficient is equal to $0.5$.

So we have again seen that the diffusion coefficients for these maps are very stable in the relevant parameter ranges, i.e. the ranges where the maps are non-ergodic. As long as the diffusion coefficient is given by

\begin{equation}
               D(h)= \frac{h}{2}+ \frac{1}{2}T(h)+\frac{1}{2}T(1-h),
\label{Eq:D(h)_gen}
\end{equation}

\noindent we can manipulate one branch of the Bernoulli-shift or the negative-Bernoulli-shift maps and the diffusion coefficient will remain unaffected in the non-ergodic parameter ranges, despite the fact that the non-ergodicity may be broken.

\section{Conclusion}
\label{sec:conclusion}

We have derived exact analytical expressions for the parameter dependent diffusion coefficients of four one dimensional maps. This was achieved by using Taylor-Green-Kubo formulae and generalised Takagi functions. Under parameter variation we have observed both fractal and linear behaviour in the diffusion coefficients. The fractality was explained in terms of the topological instability of the maps under parameter variation and this was understood by analysing the Markov partitions of the map. The linearity was explained in terms of the non-ergodicity of the maps in certain parameter ranges, this non-ergodicity splits the phase space up into two ergodic components each with their own diffusion coefficient. These individual diffusion coefficients compliment each other to create a linear diffusion coefficient. We also observed linear diffusion coefficients despite all the hallmarks of fractality being present like topological instability under parameter variation \cite{Klages-07,Klages-95,Klages-99,Klages-96} and ergodicity. In this case we found that in the relevant parameter range, the ergodic maps have a set of branches in common with the non-ergodic maps, these common branches dominate the diffusion process in the long time limit and hence we observe identical diffusion coefficients. In addition, when the parameter causes these maps to be non-ergodic, we found that the diffusion coefficients of these maps are so stable that we can drastically alter the microscopic dynamics without affecting the diffusion coefficient. This finding serves as a counter example to the previously held belief that if your system was topologically unstable and ergodic, you would observe a fractal diffusion coefficient. It is also a counter example to the belief that if you have a linear diffusion coefficient then you have a topologically stable system.

Future work will involve finding out under exactly what conditions we can manipulate the microscopic dynamics of these maps and still observe the same diffusion coefficient. We would also like to learn whether there exist any other systems which display this dominating branch phenomenon. We could potentially apply the techniques used here to more realistic, higher dimensional systems and see if we still obtain similar results. Also of interest is the consequences of introducing a bias into the system generating a current. It would be worthwile to study whether analogous phenomena exist for this other transport property and whether they can be revealed by similar techniques.

\section*{Acknowledgements}
\label{sec:Acknowlegdgements}

The authors would like to thank Gerhard Keller for helpful discussions about this work. They would like to dedicate this article to the memory of Professor Shuichi Tasaki, a pioneer in the field of dynamical systems theory applied to nonequilibrium statistical mechanics.

\appendix

\section{Takgagi function recursion relation for $M_h(x)$}
\label{App:simplify}

We can modify the Takagi function recursion relation for the lifted Bernoulli shift map in order to understand how the diffusion coefficient converges. We restrict the parameter $h$ to $[0,1]$ and see from equation (\ref{Eq:T_hfull}) that

\begin{eqnarray}\nonumber
                T^n_M(h) &=& t_M(h) + \frac{1}{2}T_M^{n-1}(\tilde{M}_h(h)) -\frac{1}{2}T^{n-1}_M(h)\\
                         &=& t_M(h) + \frac{1}{2}t_M(\tilde{M}_h(h))+ \frac{1}{4}T_M^{n-2}(\tilde{M}_h^2(h))- \frac{1}{4}T^{n-2}_M(h)-\frac{1}{2}T^{n-1}_M(h)
\label{Eq:Tak_rec_1}
\end{eqnarray}

\noindent where we have not taken the limit $n \rightarrow \infty$. If we continue to apply equation (\ref{Eq:T_hfull}) we arrive at the recursive definition

\begin{equation}
                T^n_M(h) = \sum_{k=0}^{n} \frac{1}{2^k}t_M\left(\tilde{M}_h^k(h)\right)- \sum_{k=1}^{n}\frac{1}{2^k}T_M^{n-k}(h),
\label{Eq:tak_recurs}
\end{equation}

\noindent In order to simplify equation (\ref{Eq:tak_recurs}) further we define

\begin{equation}
                \tau(n) :=\sum_{k=1}^{n}\frac{1}{2^k}T_M^{n-k}(h).
 \label{Eq:tau(n)}
 \end{equation}

\noindent We can write equation (\ref{Eq:tau(n)}) recursively as

\begin{equation}
                \tau(n)= \frac{1}{2}T_M^{n-1}(h)+\frac{1}{2}\tau(n-1).
\label{Eq:tau(n)recurs}
\end{equation}

\noindent Substituting equations (\ref{Eq:tau(n)}) and (\ref{Eq:tau(n)recurs}) into (\ref{Eq:tak_recurs}) we obtain

\begin{eqnarray}\nonumber
                T^n_M(h) &=& \sum_{k=0}^{n} \frac{1}{2^k}t_M\left(\tilde{M}_h^k(h)\right)- \tau(n).\\
                         &=& \sum_{k=0}^{n} \frac{1}{2^k}t_M\left(\tilde{M}_h^k(h)\right)- \frac{1}{2}T_M^{n-1}(h)-\frac{1}{2}\tau(n-1).
\label{Eq:tak_explicit}
\end{eqnarray}

\noindent Then subbing equation (\ref{Eq:tak_recurs}) back into equation (\ref{Eq:tak_explicit})

\begin{eqnarray}\nonumber
                T^n_M(h) &=& \sum_{k=0}^{n} \frac{1}{2^k}t_M\left(\tilde{M}_h^k(h)\right)- \frac{1}{2}\left(  \sum_{k=0}^{n-1} \frac{1}{2^k}t_M\left(\tilde{M}_h^k(h) \right) - \sum_{k=1}^{n-1}\frac{1}{2^k}T_M^{n-1-k}(h)  \right)-\frac{1}{2}\tau(n-1).\\
                         &=& \sum_{k=0}^{n} \frac{1}{2^k}t_M\left(\tilde{M}_h^k(h)\right)- \frac{1}{2}\left(  \sum_{k=0}^{n-1} \frac{1}{2^k}t_M\left(\tilde{M}_h^k(h) \right)\right)+\frac{1}{2}\tau(n-1) -\frac{1}{2}\tau(n-1).
\label{Eq:tak_explicit2}
\end{eqnarray}

\noindent We then arrive at the expression

\begin{equation}
               T^n_M(h)=\left(\sum_{k=0}^{n} \frac{1}{2^{k+1}}t_M\left(\tilde{M}_h^k(h)\right)\right) + \frac{1}{2^{n+1}}t_M(\tilde{M}_h^n(h))
\label{Eq:tak_recurs_final}
\end{equation}

\noindent which is a useful expression for the Takagi functions as it only involves terms which contain $t_M(x)$. For $0.5 \leq h \leq 1$ we can simplify equation (\ref{Eq:tak_recurs_final}) further by using the fact that $h$ is a fixed point of the modulo $1$ map, $\tilde{M}_h(x)$

\begin{equation}
               T^n_M(h)=\left(\sum_{k=0}^{n} \frac{t_M\left(h\right)}{2^{k+1}}\right) + \frac{1}{2^{n+1}}t_M(h)
\label{Eq:tak_recurs_h0.5}
\end{equation}

\noindent so our expression for the diffusion coefficient in the range $0.5\leq h \leq 1$ is

\begin{eqnarray}\nonumber
                D_M(h)   &=& \frac{h}{2} + \lim_{n\rightarrow \infty}  T_M^{n-1}(h) \\
                         &=& \frac{h}{2} + \lim_{n\rightarrow \infty} \left( \left(\sum_{k=0}^{n-1} \frac{t_M\left(h \right)}{2^{k+1}}\right) + \frac{t_M(h)}{2^{n}} \right).
\label{Eq:D(h)_0.5_1_final_app}
\end{eqnarray}

\end{document}